\newcommand{\II}{\mathcal{I}}
\newcommand{\Real}{\mathbb{R}}
\newcommand{\ZZ}{\mathbb{Z}}
\newcommand{\RR}{\mathcal{R}}
\theoremstyle{plain}
\newtheorem{Theo}{Theorem}
\newtheorem{Claim}{Claim}
\theoremstyle{definition}
\newtheorem{Def}{Definition}
\newtheorem{Prop}{Property}
\newtheorem{Coro}{Corollary}
\newtheorem{Lem}{Lemma}
\theoremstyle{remark}
\DeclareMathOperator{\wmis}{wmis}
\DeclareMathOperator{\mis}{mis}
\DeclareMathOperator{\mhs}{mhs}
\DeclareMathOperator{\LP}{LP}
\newcommand{\OPT}{\ensuremath{\mathrm{OPT}}\xspace}
\newcommand{\MHS}{\ensuremath{\mathrm{MHS}}\xspace}
\newcommand{\MIS}{\ensuremath{\mathrm{MIS}}\xspace}
\newcommand{\WMIS}{\ensuremath{\mathrm{WMIS}}\xspace}
\newcommand{\dg}{\delta_{\text{GAP}}}
\providecommand{\pb}[1]{{\sc #1} problem}
\title[Rectangles Intersecting a Diagonal Line]{Independent and Hitting Sets of Rectangles Intersecting a Diagonal Line:\\ Algorithms and Complexity.}
\author[J.R. Correa]{Jos\'e R.~Correa$^1$}
\author[L. Feuilloley]{Laurent Feuilloley$^2$}
\author[P. P\'erez-Lantero]{Pablo P\'erez-Lantero$^3$}
\author[J.A. Soto]{Jos\'e A.~Soto$^4$}
\thanks{An extended 
abstract of a preliminary version of this work appears in the proceedings of the 11th Latin American Theoretical Informatics Symposium.}
\dedicatory{
$^1$Department of Industrial Engineering, Universidad de Chile.\\ {\tt{correa@uchile.cl}}
\\[1ex]
$^2$Department of Computer Science, ENS Cachan.\\ {\tt laurent.feuilloley@ens-cachan.fr}\\[1ex]
$^3$Escuela de Ingenier\'ia Civil en Inform\'atica, Universidad de Valpara\'{i}so. {\tt pablo.perez@uv.cl}\\[1ex]
$^4$Department of Mathematical Engineering and CMM, Universidad de Chile.\\ {\tt jsoto@dim.uchile.cl}}
\begin{document}
\maketitle
\thispagestyle{empty}

\begin{abstract}Finding a maximum independent set (MIS) of a given family of axis-parallel rectangles is a
basic problem in computational geometry and combinatorics. This problem has attracted significant attention since the sixties, when Wegner conjectured that the corresponding duality gap, i.e., the maximum possible ratio between the maximum independent set and the minimum hitting set (MHS), is bounded by a universal constant. An interesting special case, that may prove useful to tackling the general problem, is the diagonal-intersecting case, in which the given family of rectangles is intersected by a diagonal. Indeed, Chepoi and Felsner recently gave a factor 6 approximation algorithm for MHS in this setting, and showed that the duality gap is between 3/2 and 6. In this paper we improve upon these results. First we show that MIS in diagonal-intersecting families is NP-complete, providing one smallest subclass for which MIS is provably hard. Then, we derive an $O(n^2)$-time algorithm for the maximum weight independent set when, in addition the rectangles intersect below the diagonal. This improves and extends a classic result of Lubiw, and amounts to obtain a 2-approximation algorithm for the maximum weight independent set of rectangles intersecting a diagonal. Finally, we prove that for diagonal-intersecting families the duality gap is between 2 and 4. The upper bound, which implies an approximation algorithm of the same factor, follows from a simple combinatorial argument, while the lower bound represents the best known lower bound on the duality gap, even in the general case.
\end{abstract}

\newpage

\section{Introduction}
Given a family of axis-parallel rectangles, two natural objects of
study are the maximum number of rectangles that do not overlap and the
minimum set of points stabbing every rectangle. These problems are
known as maximum independent set \MIS and minimum hitting set \MHS respectively,
and in the associated intersection graph they correspond to the
maximum independent set and the minimum clique covering. We study these problems for restricted classes of rectangles, and
focus on designing algorithms and on evaluating the {\em duality gap}, $\dg$,
i.e., the maximum ratio between these quantities. This term arises as \MHS is the integral version of the dual of
the natural linear programming relaxation of~\MIS.

From a computational complexity viewpoint, \MIS and \MHS of rectangles are strongly NP-hard~\cite{Fowler81,Imai83}, so attention has been put into approximation algorithms and polynomial time algorithms for special classes. The current best known approximation factor for \MIS  are $O(\log\log n)$ \cite{Chalermsook09}, and $O(\log n / \log\log n)$ for weighted \MIS (\WMIS)~\cite{Chan09}. Very recently, Adamaszek and Wiese~\cite{Adamaszek13} designed a pseudo-polynomial time algorithm finding a $(1+\varepsilon)$-approximate solution for \WMIS, but it is unknown whether there exist polynomial time constant factor approximation algorithms. A similar situation occurs for \MHS: the current best approximation factor is
$O(\log\log n)$ \cite{Aronov10}, while in general, the existence of a constant factor approximation is open. Polynomial time algorithms for these
problems have been obtained for special classes. When all rectangles are intervals, the underlying intersection graph is an interval graph and even
linear time algorithms, assuming the input is sorted, are known for \MIS, \MHS and \WMIS~\cite{HsiaoT92}. Moving beyond interval graphs, Lubiw~\cite{Lubiw91} devised a
cubic-time algorithm for computing a maximum weight independent family of point-intervals, which can be seen as families of rectangles having their
upper-right
corner along the same diagonal.  More recently, Soto and Telha~\cite{SotoTelha11} considered the case where the upper-right and lower-left corners of
all rectangles are two prescribed point sets of total size $m$. They designed an algorithm that computes both \MIS and \MHS in the time required to
do $m$ by $m$ matrix multiplication, and showed that \WMIS is NP-hard on this class. Finally, there are also known PTAS for special cases, including
the results of
Chan~\cite{Chan09} for squares, and Mustafa and Ray~\cite{Mustafa10} for unit height rectangles.

It is straightforward to observe that given a family of rectangles the size of a maximum independent set is at most that of a minimum
hitting set. In particular, for interval graphs this inequality is actually an equality,
and this still holds in the case studied by Soto and Telha~\cite{SotoTelha11}, so that the duality gap is 1 for these classes. A natural question to
ask
is whether the duality gap for general families of rectangles is bounded.  Indeed, already in the sixties Wegner~\cite{Wegner65} conjectured that the
duality gap for arbitrary
rectangles families equals~2, whereas Gy\'arf\'as and Lehel~\cite{GyarfasL85} proposed the weaker conjecture that this gap is bounded by a universal
constant. Although
these conjectures are still open, K\'arolyi and Tardos~\cite{Karolyi96} proved that the gap is within $O(\log( \mis))$, where $\mis$ is the size of a
maximum independent set. For some special classes, the duality gap is indeed a constant. In particular, when all rectangles intersect a given
diagonal
line, Chepoi and Felsner \cite{ChepoyF13} prove that the gap is between $3/2$ and $6$, and the upper bound has been further improved for more
restricted classes~\cite{ChepoyF13,Hixon13}.

\subsection{Notation and classes of rectangle families}
Throughout this paper, $\RR$ denotes a family of $n$ closed, axis-parallel rectangles in~$\Real^2$. A rectangle $r\in \RR$ is defined by its lower-left
corner $\ell^r$ and its upper-right corner $u^r$. For a point $v\in \Real^2$ we let $v_x$ and $v_y$ be its $x$-coordinate and $y$-coordinate,
respectively. Also, each rectangle $r\in \RR$ is associated with a nonnegative weight $w_r$. We also consider a monotone curve, given by a decreasing
bijective real function, so that the boundary of each $r \in \RR$ intersects the curve in at most 2 points. We use $a^r$ and $b^r$ to denote the
higher
and lower of these points respectively (which may coincide). We identify the
rectangles in $\RR$ with the set
$[n]=\{1,\ldots,n\}$ so that  $a^1_x< a^2_x < \dots < a^n_x$. For any rectangle~$i$, we define $f(i)$ as the rectangle $j$ (if it
exists) following $i$  in the
order of the $b$-points, that is, $b^i_x < b^j_x$ and no rectangle $k$ is such that $b^i_x < b^k_x< b^j_x$. For reference, see
Figure~\ref{fig:harpoons}.

A set of rectangles $\mathcal{Q}\subseteq \RR$ is called independent if and only if no two rectangles in
$\mathcal{Q}$ intersect. On the other hand, a set $H \subseteq \Real^2$  of points is a hitting set of~$\RR$ if every rectangle $r\in \RR$ contains at
least one point in~$H$.
In this paper we consider the problem of finding an independent set of rectangles in $\RR$ of maximum cardinality (\MIS), and its weighted version
(\WMIS). We also consider the problem of finding a hitting set of $\RR$ of minimum size (\MHS). Let us denote by $\mis(\RR)$, $\wmis(\RR)$,
$\mhs(\RR)$ the solutions to the above problems, respectively.

Since the solutions of the previous problems depend on properties of the intersection graph $\II(\RR)=(\RR,\{rr'\colon r\cap r'\neq \emptyset\})$
 of the family $\RR$, we will assume that no two defining corners in $\{\ell^1,\ell^2,\ldots,\ell^n, u^1, u^2,\ldots, u^n\}$ have the same
$x$-coordinates or $y$-coordinates (this is done without loss of generality by individually perturbing each rectangle). We will also
assume that the curve mentioned in the first paragraph is the diagonal line $D$ given by the equation $y=-x$. This is assumed without loss of
generality: by
applying suitable piecewise linear transformations on both coordinates we can transform the rectangle family into one with the same intersection graph
such that every rectangle intersects $D$. In what follows, call the closed halfplanes given by $y\geq -x$ and $y \leq -x$, the \emph{halfplanes} of $D$. Note that both halfplanes intersect in
$D$. The points in the bottom (resp.~top) halfplane are said to be below (resp.~above) the diagonal.

We study four special classes of rectangle families intersecting $D$.
\begin{Def}[Classes of rectangle families]\mbox{}\label{def-classes}
\begin{enumerate}[leftmargin=17pt]
\item $\RR$ is \emph{diagonal-intersecting} if for all $r\in \RR$, $r\cap D\neq \emptyset$.
\item $\RR$ is \emph{diagonal-splitting} if there is a side (upper, lower, left, right) such that $D$ intersects all $r \in R$ on that particular
side.
\item $\RR$ is \emph{diagonal-corner-separated} if there is a halfplane of $D$ containing the same three corners of all $r\in \RR$.
\item $\RR$ is \emph{diagonal-touching} if there is a corner (upper-right or lower-left) such that $D$ intersects all $r \in R$ exactly on that
corner (in particular, either all the upper-right corners, or all the lower-left corners are in $D$.)
\end{enumerate}
\end{Def}
By rotating the plane, we can make the following assumptions: In the second class, we assume that the common side of intersection is the upper one; in
the third class, that the upper-right corner is on the top halfplane of $D$ and the other three are in the bottom one; and in the last
class, that the corner contained in $D$ is the upper-right one. Under these assumption, each type of rectangle family is more
general than the next one. It is worth noting that in terms of their associated intersection graphs, the second and third classes coincide. Indeed, two rectangles of a diagonal-splitting rectangle family $\RR$ intersect if and only if they have a point in common in the bottom halfplane of $D$. Therefore, we can replace each rectangle $r$ with the minimal possible one containing the region of~$r$ that is below the diagonal, obtaining a diagonal-corner-separated family with the same intersection graph. See Figure~\ref{fig:cases_figures} for some examples of rectangle families.

\begin{figure}
\centering
\begin{tabular}{ccc}
\subfigure{\includegraphics[scale=0.9,page=1]{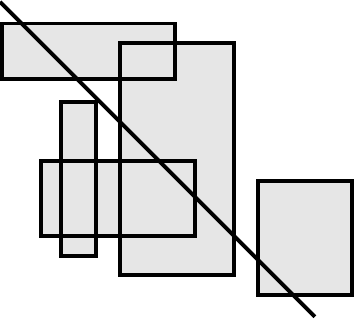}}&\qquad
\subfigure{\includegraphics[scale=0.9,page=2]{five_cases}}&\qquad
\subfigure{\includegraphics[scale=0.9,page=3]{five_cases}}\\
\small diagonal-intersecting & \small diagonal-lower-intersecting &\small diagonal-splitting\\
\end{tabular}
\begin{tabular}{cc}
\subfigure{\includegraphics[scale=0.9,page=4]{five_cases}}&\qquad
\subfigure{\includegraphics[scale=0.9,page=5]{five_cases}}\\
\small diagonal-corner-separated & \small diagonal-touching
\end{tabular}
\caption{Examples of rectangle families.}\label{fig:cases_figures}
\end{figure}

\begin{Def}[diagonal-lower-intersecting]
A diagonal-intersecting family $\RR$ is \emph{diagonal-lower-intersecting} if whenever two rectangles in $\RR$ intersect, they have a common point in
the bottom halfplane of $D$.
\end{Def}

As we will see later, the graph classes associated to these families satisfy the following inclusions: $\mathcal{G}_{\text{touch}} \subsetneq \mathcal{G}_{\text{low-int}} = \mathcal{G}_{\text{split}} = \mathcal{G}_{\text{c-sep}} \subsetneq \mathcal{G}_{\text{int}}$.
Here $\mathcal{G}_{\text{int}}=\{\II(\RR)\colon \RR \text{ is diagonal-intersecting}\}$ is the class of intersection graphs arising from diagonal-intersecting families of rectangles, and $\mathcal{G}_{\text{low-int}}$, $\mathcal{G}_{\text{split}}$, $\mathcal{G}_{\text{c-sep}}$ and $\mathcal{G}_{\text{touch}}$ are the classes arising from diagonal-lower-intersecting,
diagonal-splitting, diagonal-corner-separated, and diagonal-touching families of rectangles, respectively.
We observe that these classes have appeared in the literature under different names. Hixon~\cite{Hixon13} call the graphs
in $\mathcal{G}_{\text{touch}}$ \emph{hook graphs}, Soto and Thraves~\cite{MSoto13} call them \emph{\textsc{And(1)} graphs}, while those in $\mathcal{G}_{\text{int}}$ are called \emph{separable rectangle graphs} by Chepoi and Felsner~\cite{ChepoyF13}.

\subsection{Our results}
In $\S 2$ we give a quadratic-time algorithm to compute a $\wmis(\RR)$ when $\RR$ is diagonal-lower-intersecting and a
2-approximation for the same
problem when $\RR$ is diagonal-intersecting. The former is the first polynomial time algorithm for \WMIS on a natural class
containing diagonal-touching rectangle families. Our algorithm improves upon previous work in the area. Specifically, for diagonal-touching rectangle
families, the best known algorithm to solve \WMIS is due to Lubiw~\cite{Lubiw91}, who designed a cubic-time algorithm for the problem in the context
of
\emph{interval systems}. More precisely, a collection of \emph{point-intervals} $Q=\{(p_i,I_i)\}_{i=1}^n$ is a family such that for all $i$, $p_i \in
I_i$ and $I_i=[\text{left}(I_i), \text{right}(I_i)] \subseteq \Real$ are a point
 and an interval, respectively. $Q$ is called \emph{independent} if for $k\neq j$, $p_k \notin I_j$ or $p_j \notin I_k$. Given a finite collection
$Q$ of weighted point-intervals, Lubiw designed a dynamic programming based algorithm to find a maximum weighted independent subfamily of $Q$. It is
easy to see\footnote{This equivalence has been noticed before~\cite{SotoTelha11}.} that this problem is equivalent to that of
finding $\wmis(\RR)$ for the diagonal-touching family $\RR=\{r_i\}_{i=1}^n$ where $r_i$ is the rectangle with upper right corner $(p_i,-p_i)$ and
lower left corner $(\text{left}(I_i),-\text{right}(I_i))$ and having the same weight as that of $(p_i,I_i)$. Lubiw's algorithm was recently
rediscovered by Hixon \cite{Hixon13}.

As in Lubiw's, our algorithm is based on dynamic programming. However, rather than decomposing the instance into small triangles and
computing the optimal solution for every possible triangle, our approach involves computing the optimal solutions for what we call a \emph{harpoon},
which is defined for every pair of rectangles. We show that the amortized cost of computing the optimal solution for all harpoons is constant, leading
to an overall quadratic time. Interestingly, it is possible to show that our algorithm is an extension of the linear-time algorithm for
maximum weighted independent set of intervals~\cite{HsiaoT92}.

 In \S\ref{duality} we give a short proof that the duality gap $\dg$, i.e., the maximum ratio $\mhs/\mis$, is always at most 2 for
diagonal-touching families; we also show that $\dg \leq 3$ for diagonal-lower-intersecting families, and $\dg \leq 4$ for diagonal-intersecting
families. These bounds yields simple 2, 3, and 4-approximation polynomial time algorithms for \MHS on each class (they can also be used as
approximation algorithms for \MIS with the same guarantee, however, as discussed in the previous paragraph, we have an exact algorithm for \WMIS on the
two first classes, and a 2-approximation for the last one). The 4-approximation for \MHS in diagonal-intersecting families is the best
approximation known and improves upon the bound of 6 of Chepoi and Felsner~\cite{ChepoyF13}, who also give a bound of 3 for diagonal-splitting
families based on a different method. For diagonal-touching families, Hixon~\cite{Hixon13} independently showed that
$\dg \leq 2$. To complement the previous results, we show that the duality gap for diagonal-lower-intersecting families is at least 2. We do this
by exhibiting an infinite family of instances whose gap is arbitrarily close to 2. Similar instances were obtained, and communicated to us, by Cibulka
et al.~\cite{Cibulka06}. Note that this lower bound of 2 improves upon the 5/3 by Fon-Der-Flaass and
Kostochka~\cite{FonDerFlaassK93} which was the best known lower bound for the duality gap of general rectangle families.

In \S\ref{sec:hardness}, we prove that computing a $\MIS$ on a diagonal-intersecting family is NP-complete.
In light of our polynomial-time algorithm for diagonal-lower-intersecting families, the latter
hardness result exhibits what is, in a way, a class at the boundary between polynomial-time solvability and NP-completeness. Three decades ago Fowler 
et al.~\cite{Fowler81} (see also Asano~\cite{Asano91}) established that computing an MIS of axis-parallel rectangles squares is NP-hard, by actually 
showing that this is the case even for squares. It is worth mentioning that diagonal-intersecting families constitute the first natural subclass of 
for which NP-hardness of MIS has been shown since then. Our proof 
actually only uses rectangles that touch the diagonal line, but that may intersect above or below it, and uses a reduction from  {\sc Planar 3-sat}.

Combining the results of  Chalermsook and Chuzhoy~\cite{Chalermsook09} and Aronov et al.~\cite{Aronov10}, we show in \S\ref{general-duality} that the duality gap is $O((\log\log \mis(\RR))^2)$ for a general family $\RR$ of rectangles, improving on the logarithmic bound of K\'arolyi and Tardos~\cite{Karolyi96}. Finally, in \S\ref{graph-classes} we prove the claimed inclusions of the rectangle families studied in this paper, described in Definition \ref{def-classes}.

\section{Algorithms for \WMIS}

The idea behind Lubiw's algorithm \cite{Lubiw91} for \WMIS on diagonal-touching families is to compute the optimal independent set $\OPT_{ij}$
included in every possible triangle defined by the points $u^i$, $u^j$ (which are on $D$), and $(u^i_x,u^j_y)$ for two rectangles $i<j$. The principle
exploited is that in $\OPT_{ij}$ there exists one rectangle, say $i<k<j$, such that $\OPT_{ij}$ equals the union of $\OPT_{ik}$, the rectangle $k$,
and $\OPT_{kj}$. With this idea the overall complexity of the algorithm turns out to be cubic in $n$. We now present our algorithm, which works for
the more general diagonal-lower-intersecting families, and that is based in a more elaborate idea involving what we call \emph{harpoons}.

\subsection{Algorithm for diagonal-lower-intersecting families}

\label{algorithm_harpoons}
Let us first define some geometric objects that will be used in the algorithm. For any pair of rectangles $i<j$ we define $H_{i,j}$ and $H_{j,i}$, two
shapes that we call harpoons. See Figure~\ref{fig:harpoons}.
More precisely, the \emph{horizontal harpoon} $H_{i,j}$ consists of the points below the diagonal $D$ obtained by subtracting rectangle $i$ from the
closed box defined by the points $(\ell^i_x,a^i_y)$ and $a^j$.
Similarly, the \emph{vertical harpoon} $H_{j,i}$ are the points below $D$ obtained by subtracting $j$ from the box defined by the points
$(b^j_x,\ell^j_y)$ and $b^i$.
Also, for every rectangle $i$ with $i\geq 1$ (resp.~such that $f(i)$ exists) we define $B_{h}^i$ (resp.~$B_{v}^i$) as the open horizontal strip that goes through
$a_{i-1}$ and
$a_{i}$ (resp.~as the open vertical strip that goes through $b_{i}$ and $b_{f(i)}$).

\begin{figure}[ht]
\centering
\begin{tabular}{ccc}
\subfigure{\includegraphics[scale=0.8,page=1]{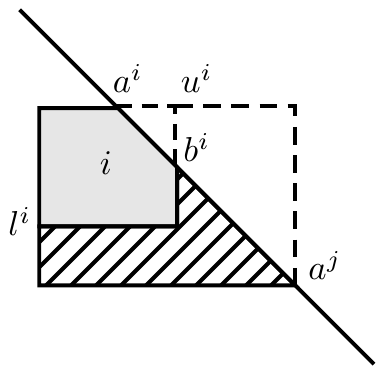}}&\qquad
\subfigure{\includegraphics[scale=0.8,page=2]{harpoons}}&\qquad
\subfigure{\includegraphics[scale=0.8,page=3]{harpoons}}\\
\subfigure{\includegraphics[scale=0.7,page=4]{harpoons}}&\qquad
\subfigure{\includegraphics[scale=0.8,page=5]{harpoons}}&\qquad
\subfigure{\includegraphics[scale=0.8,page=2]{harpoons}}
\end{tabular}
\caption{\small On the left, the construction of a harpoon and the construction of the strips. On the middle, the harpoons $H_{ij}$ and $H_{ji}$, with $i<j$.
On the right, other particular cases for the harpoon $H_{ij}$ with $i<j$ (the symmetric cases occur for $H_{ji}$).}\label{fig:harpoons}
\end{figure}

We say that a rectangle $r$ is contained in the set $H_{i,j}$ (and abusing notation, we write $r\in H_{i,j}$) if the region of $r$ below
the diagonal is contained in~$H_{i,j}$.

In our algorithm we will compute $S(i,j)$, the weight of the maximum independent set for the subset of rectangles contained in the harpoon
$H_{i,j}$. We define two dummy rectangles $0$ and $n+1$, at the two ends of the diagonal such that the harpoons defined by these rectangles contain
every other rectangle. As previously observed, two rectangles intersect in $\RR$ if and only if they intersect below the diagonal. Therefore,
$\mbox{wmis}(\mathcal{R})=S(0,n+1)$.

\paragraph{\textbf{Description of the algorithm:}}

\begin{itemize}[leftmargin=17pt]
\item[1.] \emph{Initialization.} In the execution of the algorithm we will need to know what rectangles have their lower-left corner in which strips.
To compute this we do a preprocessing step. Define $\hat B_{v}^i$ and $\hat B_{h}^i$ as initially empty. For each rectangle $r\in \RR$, check if
$\ell^r$ is in $B_{h}^i$. If so, we add $r$ to the set $\hat B_{h}^i$. Similarly, if $\ell^r$ is in $B_{v}^i$, we add $r$ to the
set $\hat B_{v}^i$.

\item[2.] \emph{Main loop.} We compute the values $S(i,j)$ corresponding to the maximum-weight independent set of rectangles in $\RR$ strictly
contained in $H_{i,j}$. We do this by dynamic programming starting with the values $S(i,i)=0$. Assume that we have computed all $S(i,j)$ for
all $i$, $j$ such that $|i-j| < \ell$. We now show how to compute these values when $|i-j|=\ell$.
\begin{itemize}
\item[2.1] Set  $S(i,j)=S(i,j-1)$ if $i<j$ and  $S(i,j)=S(i,f(j))$ if $i>j$.
\item[2.2] Define $\hat{B}_{i,j}$ as $\hat{B}_h^j$ if $i<j$, or $\hat{B}_v^j$ if $i>j$.
\item[2.3] For each rectangle $k \in \hat{B}_{i,j}$ and strictly contained in harpoon $H_{i,j}$ do:
\begin{itemize}
\item[2.3.1.] Compute $m = w_k + \max\{{S}(i,k),{S}(k,i)\} + S(k,j).$
\item[2.3.2.] If $m > S(i,j)$, then $S(i,j):= m$.
\end{itemize}
\end{itemize}
\item[3.] {\em Output.} ${S}(0,n+1)$.
\end{itemize}

It is trivial to modify the algorithm to return not only $\wmis(\RR)$ but also the independent set of rectangles attaining that weight.
We now establish the running time of our algorithm.

\begin{Theo}
The previous algorithm runs in $O(n^2)$.
\end{Theo}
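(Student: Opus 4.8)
The plan is to walk through the three phases of the algorithm and bound the work of each; all but one contribution is immediate, and the real content is an amortization argument for the inner loop of step~2.3. First I would record the cheap parts. The table $S$ is indexed by pairs $(i,j)$ with $i,j\in\{0,1,\dots,n+1\}$, so there are $(n+2)^2=O(n^2)$ entries to produce; allocating this table and filling the base values $S(i,i)=0$ costs $O(n^2)$. For a fixed pair $(i,j)$, steps~2.1 and~2.2 take $O(1)$ each (we keep $\hat{B}_{i,j}$ as a reference to the precomputed list $\hat{B}_h^j$ or $\hat{B}_v^j$, not a copy; even if it were copied, the cost is dominated by the bound on step~2.3 below). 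The output step~3 is $O(1)$. Finally, the preprocessing in step~1 can be done in $O(n^2)$ the naive way: for each of the $n$ rectangles $r$, scan the $O(n)$ horizontal strips and the $O(n)$ vertical strips to find the unique one of each kind containing $\ell^r$; this is within the claimed bound (and can be improved to $O(n\log n)$ by sorting the strip boundaries, though we do not need this).

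It remains to bound the total cost of step~2.3 over all pairs $(i,j)$. For a fixed pair the loop body is executed once per element $k$ of $\hat{B}_{i,j}$, and I claim each execution costs $O(1)$. The test ``the part of $k$ below $D$ is strictly contained in $H_{i,j}$'' is a constant number of coordinate comparisons: $H_{i,j}$ is an axis-parallel box with a single axis-parallel rectangular notch (rectangle $i$) removed, and the part of $k$ below $D$ is $k$ with one triangular corner cut off by the line $D$, so containment is decided by comparing the finitely many extreme coordinates of $k$, of $i$, and of the box defining $H_{i,j}$. The update in steps~2.3.1--2.3.2 is a constant number of array look-ups into $S$, one $\max$, two additions, a comparison and an assignment. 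Hence the cost of step~2.3 for the pair $(i,j)$ is $O(|\hat{B}_{i,j}|)$, which equals $O(|\hat{B}_h^j|)$ when $i<j$ and $O(|\hat{B}_v^j|)$ when $i>j$.

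The key step is now the amortization. Because every point $a^t$ (including the two dummies $0$ and $n+1$) lies on the diagonal $D$, the ordering $a^1_x<\dots<a^n_x$ forces $a^1_y>\dots>a^n_y$; consequently the open horizontal strips $B_h^1,\dots,B_h^{n+1}$ partition the plane minus the horizontal lines through the $a$-points, so (in general position) the corner $\ell^r$ of every rectangle $r$ lies in exactly one of them, giving $\sum_{j}|\hat{B}_h^j|=n$. Symmetrically, the $b$-points lie on $D$ and $f$ enumerates them by increasing $x$-coordinate, so the vertical strips $B_v^{\,\cdot}$ partition the plane and $\sum_{j}|\hat{B}_v^j|=n$. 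Summing the per-pair bound of the previous paragraph over all pairs,
\[
\sum_{i\neq j}O\bigl(|\hat{B}_{i,j}|\bigr)
\;=\;\sum_{j}\Bigl(\sum_{i<j}O(|\hat{B}_h^j|)+\sum_{i>j}O(|\hat{B}_v^j|)\Bigr)
\;=\;O(n)\sum_{j}\bigl(|\hat{B}_h^j|+|\hat{B}_v^j|\bigr)
\;=\;O(n)\cdot O(n)\;=\;O(n^2).
\]
Adding the $O(n^2)$ contributions of the remaining steps yields the claimed $O(n^2)$ running time.

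The step I expect to require the most care is the claim that the test ``$k$ strictly contained in $H_{i,j}$'' runs in constant time: one must spell out the exact shape of a harpoon and of the below-diagonal part of a rectangle and verify that containment reduces to a bounded set of coordinate comparisons. (One should also note, mostly for peace of mind, that every $k$ counted in step~2.3 satisfies $i<k<j$, since $a^k$ lies in the box defining $H_{i,j}$ and the $a$-points are reverse-sorted; this makes the recursion well founded and ensures $S(i,k)$, $S(k,i)$, $S(k,j)$ are already available when the pair $(i,j)$ is processed, but for the running-time bound it only matters that these are $O(1)$ table reads.)
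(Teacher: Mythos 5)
Your proposal is correct and follows essentially the same argument as the paper: a per-pair cost of $O(1+|\hat{B}_{i,j}|)$ combined with the observation that each rectangle appears in at most one $\hat{B}_h^j$ and one $\hat{B}_v^j$, giving $\sum_{(i,j)}O(1+|\hat{B}_{i,j}|)=O(n^2)$. You merely spell out the amortization sum and the constant-time harpoon-containment test more explicitly than the paper does.
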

\begin{proof}

The pre-processing stage needs linear time if the rectangles are already sorted, otherwise we require $O(n \log n)$ time. The time to compute
$S(i,j)$ is $O(1+|\hat{B}_{i,j}|)$ since checking if a rectangle is in a harpoon takes
constant time.
As the index of a rectangle is at most once in some $\hat{B}_h$ and at most once in some $\hat{B}_v$, the time to fill all the table $S(\cdot
,\cdot)$ is:
$$\sum_{(i,j)\in [n]^2}O(1+|\hat{B}_{i,j}|)= O(n^2).$$

The algorithm is then quadratic in the number of rectangles.
\end{proof}

In order to analyze the correctness of our algorithm we define a partial order over the rectangles in $\RR$.

\begin{Def}
The (strict) \emph{onion ordering} $\prec$ in $\RR$ is defined as
\begin{align*}
 i \prec j &\iff \text{rectangles $i$ and $j$ are disjoint}, \ell^i_x < \ell^j_x, \text{ and } \ell^i_y < \ell^j_y.
\end{align*}
\end{Def}
It is immediate to see that $\prec$ is a strict partial ordering in $\RR$. We say that $i$ is dominated by $j$ if $i\prec j$; in other words, $i$ is dominated by $j$ if $i$ and $j$ are disjoint and $\ell^i$ is dominated by $\ell^j$ under the standard dominance relation of $\mathbb{R}^2$.

For any rectangle $k$ in a harpoon $H_{i,j}$, let $S_k(i,j)$ be the value of the maximum-weight independent set containing  $k$ and rectangles in
$H_{i,j}$ which are not dominated by $k$ in the onion ordering, and $\mathcal{S}_k(i,j)$ be the corresponding set of rectangles.

\begin{Lem}
For any rectangle $k$ in $H_{i,j}$, the following relation holds:
\begin{align*}
S_{k}(i,j) = w_k + \max \left\{S(i,k),S(k,i)\right\} + S(k,j).
\end{align*}
\end{Lem}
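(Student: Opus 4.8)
The plan is to prove the identity by establishing the two inequalities separately; both follow once one understands the structure of the sets feasible for $S_k(i,j)$, i.e.\ the independent sets $T$ with $k\in T$, all members' below-diagonal regions inside $H_{i,j}$, and no member dominated by $k$. As preliminary observations I would record that the region of $k$ below $D$ has $a^k$ as its topmost point and $b^k$ as its rightmost point, so that its inclusion in the box defining $H_{i,j}$ forces $a^k_y\le a^i_y$ and $b^k_x\le a^j_x$, whence $i<k<j$ in the $a$-ordering; thus $H_{i,k}$ and $H_{k,j}$ are horizontal harpoons and $H_{k,i}$ a vertical one. Throughout I will use the defining feature of diagonal-lower-intersecting families — two rectangles meet iff their portions below $D$ meet — so that both independence and the membership relation ``$r$ contained in $H_{\cdot,\cdot}$'' can be checked with below-$D$ regions alone.

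For the lower bound, take $B$ realizing $S(k,j)$ and $A$ realizing the larger of $S(i,k)$, $S(k,i)$, and verify that $A\cup\{k\}\cup B$ is feasible for $S_k(i,j)$. The heart of this direction is a monotonicity property of harpoons, namely that $H_{i,k},H_{k,i},H_{k,j}$ are each contained in $H_{i,j}$: inclusion of the boxes follows from the inequalities above, $H_{i,k}$ deletes the same rectangle $i$, and for $H_{k,i}$ and $H_{k,j}$ — which delete $k$ instead — one argues that their boxes meet rectangle $i$ only in a degenerate set, so by diagonal-lower-intersecting any rectangle inside them is disjoint from $i$. Independence is obtained similarly: $B$-rectangles are disjoint from $k$ by construction; $A$-rectangles are disjoint from $k$ because the box of $H_{i,k}$ touches the below-$D$ region of $k$ only along the line through $a^k$, while $H_{k,i}$ already deletes $k$; and an $A$-rectangle and a $B$-rectangle are disjoint because the boxes of $H_{k,j}$ and $H_{i,k}$ (resp.\ $H_{k,i}$) overlap only along $y=a^k_y$ (resp.\ inside $k$), so the below-$D$ region of $k$ separates them. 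That no member is dominated by $k$ follows from the positions of the three sub-harpoons relative to $\ell^k$: $B$-rectangles have all below-$D$ abscissae at least $\ell^k_x$, and $A$-rectangles lie above $y=a^k_y>\ell^k_y$ or on the $\ell^k$-side prescribed by the box of $H_{k,i}$.

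For the upper bound, take an optimal feasible $\mathcal S_k(i,j)$, delete $k$, split off the members whose below-$D$ region lies in $H_{k,j}$ (these form an independent set in $H_{k,j}$, of weight at most $S(k,j)$), and argue that the remaining members form an independent set lying entirely in $H_{i,k}$ or entirely in $H_{k,i}$ (weight at most $\max\{S(i,k),S(k,i)\}$); adding the three contributions gives the bound. Two geometric facts are needed. The first, and the step I expect to be the main obstacle: every feasible member not contained in $H_{k,j}$ is contained in $H_{i,k}$ or in $H_{k,i}$. A rectangle disjoint from $k$ lies strictly to one of the four sides of $k$, and it is precisely the interplay of the non-domination condition with the diagonal-lower-intersecting property that should preclude the configurations that escape all three sub-harpoons; establishing this will require a careful, figure-guided case analysis on the position of each member relative to $k$ (and on which side $D$ meets $i$ and $k$). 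The second fact: a rectangle contained in $H_{i,k}$ and a rectangle contained in $H_{k,i}$ necessarily intersect below $D$ (the two harpoons overlap near the corner of $k$ in a way that forces this), so an independent set cannot use genuine members of both; combined with the first fact, all the leftover rectangles lie in a single one of $H_{i,k}$, $H_{k,i}$, completing the proof.
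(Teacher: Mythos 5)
Your overall strategy coincides with the paper's: remove $k$ from an optimal set, and show that what remains splits between $H_{k,j}$ and exactly one of $H_{i,k}$, $H_{k,i}$, which is precisely how the paper argues. However, the step you yourself identify as ``the main obstacle'' --- that every member of $\mathcal{S}_k(i,j)\setminus\{k\}$ not contained in $H_{k,j}$ must be contained in $H_{i,k}$ or in $H_{k,i}$ --- is asserted but never established, and this is the entire geometric content of the lemma. The paper resolves it by a case analysis not on the position of each \emph{member} relative to $k$ (as you propose) but on the relative position of the two rectangles $i$ and $k$ themselves: whether they are separated by a vertical line only, by a horizontal line only, or by both. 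In the first two cases one of the sub-harpoons contains the other (e.g.\ $H_{i,k}\subseteq H_{k,i}$), so only two regions are really in play and the decomposition is immediate; only in the third case are all three harpoons needed, and there the non-domination condition (minimality of $k$ in the onion order within the set) is what rules out rectangles escaping the union. Without carrying out some version of this analysis your upper-bound direction is incomplete.

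A secondary inaccuracy: your ``second fact,'' that a rectangle contained in $H_{i,k}$ and one contained in $H_{k,i}$ \emph{necessarily intersect} below $D$, is false as stated --- in the paper's first case $H_{i,k}\subseteq H_{k,i}$, so two such rectangles can be disjoint (both simply lie in $H_{k,i}$). The correct statement, and the one the paper uses, is that two independent rectangles of the set lying in $H_{i,k}\cup H_{k,i}$ must lie in a \emph{common} harpoon; in the nested cases this is by containment, and only in the doubly-separated case is it by a forced intersection. Your hedge about ``genuine members of both'' gestures at this but does not repair the argument. The lower-bound direction of your proposal is essentially fine and is left implicit in the paper (it is absorbed into the equality claim), so the work that remains is exactly the case analysis you deferred.
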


\begin{proof}
Since $k \in H_{i,j}$, we have that $i$, $k$ and $j$ are mutually non-intersecting, and as indices, $\min(i,j)<k<\max(j,i)$.
Assume that the harpoon is horizontal, i.e., $i<j$ (the proof for $i>j$ is analogous). In particular, we know that $a^i, b^i, a^k, b^k,
a^j, b^j$
appear in that order on the diagonal.
There are three cases for the positioning of the two rectangles $i$ and $k$. See Figure~\ref{Three_cases}.

\begin{figure}[h]
\centering
\subfigure{\includegraphics[scale=0.7,page=1]{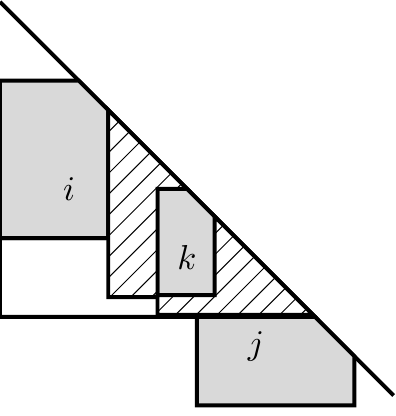}}
\subfigure{\includegraphics[scale=0.7,page=2]{fig2_ver1}}
\subfigure{\includegraphics[scale=0.7,page=3]{fig2_ver1}}
\caption{\small The three cases for a rectangle in a horizontal harpoon.}\label{Three_cases}
\end{figure}
\textit{First case}: $i$ and $k$ are separated by a vertical line, but not separated by a horizontal one. Noting that $H_{i,k} \subseteq H_{k,i}$,
we conclude that all the rectangles of
$\mathcal{S}_{k}(i,j)\setminus\{k\}$ are in $H_{k,i}$ or in $H_{k,j}$. Since $H_{k,i}$ and $H_{k,j}$ are disjoint, as shown on the first picture, we
conclude
the correctness of the formula.

\medskip

\textit{Second case}: $i$ and $k$ are separated by a horizontal line, but not by a vertical one. The proof follows almost exactly as in the first
case.

\medskip

\textit{Third case}: $i$ and $k$ are separated by both a horizontal line and a vertical line. By geometric and minimality arguments, all the
rectangles in $\mathcal{S}_k(i,j)\setminus\{k\}$ are in the union of the three harpoons $H_{i,k}$, $H_{k,i}$ and $H_{k,j}$
depicted. Finally, if there are two
rectangles in $H_{i,k} \cup H_{k,i}$ then they must be in the same harpoon, so the formula holds.
\end{proof}

\begin{Theo}
Our algorithm returns a maximum weight independent set of~$\RR$.
\end{Theo}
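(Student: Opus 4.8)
The plan is to show, by induction on $\ell=|i-j|$, that the algorithm computes $S(i,j)=\OPT(i,j)$, where $\OPT(i,j)$ denotes the maximum weight of an independent set of rectangles all contained in the harpoon $H_{i,j}$ (in the ``region below $D$'' sense defined above). Granting this at $(i,j)=(0,n+1)$ finishes the proof: the dummies $0$ and $n+1$ are chosen so that $H_{0,n+1}$ contains every rectangle, and in a diagonal-lower-intersecting family two rectangles intersect if and only if their parts below $D$ intersect, so $\OPT(0,n+1)=\wmis(\RR)$; a witnessing set is then recovered by the standard backtracking over the table, as remarked after the algorithm's description. The base case $S(i,i)=0=\OPT(i,i)$ holds since $H_{i,i}$ contains no rectangle. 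For the inductive step one first checks that every entry $S(i',j')$ consulted while computing $S(i,j)$ has $|i'-j'|<\ell$: for step~2.3 this is exactly the relation $\min(i,j)<k<\max(i,j)$ proved in the Lemma, and for step~2.1 the consulted entry is that of the sub-harpoon of $H_{i,j}$ obtained by retracting its tip, whose index gap is strictly smaller.

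Next I would establish the two inequalities separately. The bound $S(i,j)\le\OPT(i,j)$ is the easy direction: every value ever stored in $S(i,j)$ is the weight of a genuine independent set contained in $H_{i,j}$. The initial assignment in step~2.1 is, by the inductive hypothesis, the optimum of a sub-harpoon of $H_{i,j}$; and each candidate $m$ in step~2.3.1 equals $S_k(i,j)$ by the Lemma, while by definition $S_k(i,j)=w(\mathcal{S}_k(i,j))$ with $\mathcal{S}_k(i,j)$ an independent set consisting of $k$ together with rectangles of $H_{i,j}$, hence itself contained in $H_{i,j}$. Taking the maximum over all these honest values gives $S(i,j)\le\OPT(i,j)$.

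For the reverse bound $S(i,j)\ge\OPT(i,j)$, fix an optimal independent set $Q\subseteq H_{i,j}$ and assume $i<j$ (the case $i>j$ is symmetric, with $B_v$ and $f$ replacing $B_h$ and the predecessor). If $Q\subseteq H_{i,j-1}$ then, since $H_{i,j-1}\subseteq H_{i,j}$, $Q$ is optimal there too, so $w(Q)=\OPT(i,j-1)=S(i,j-1)\le S(i,j)$ by step~2.1 and induction. Otherwise $Q$ has a rectangle whose part below $D$ leaves $H_{i,j-1}$; a short geometric check shows $H_{i,j}\setminus H_{i,j-1}$ lies inside the horizontal strip $B_h^j$ through $a^{j-1}$ and $a^j$, and that any rectangle meeting $D$ has its lower-left corner below $D$, hence inside the box defining $H_{i,j}$, so such a rectangle $q_0\in Q$ satisfies $\ell^{q_0}\in B_h^j$, i.e.\ $q_0\in\hat B_{i,j}$. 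Now let $k$ be a $\prec$-minimal element of $\{q\in Q: q=q_0\text{ or }q\prec q_0\}$; then $k$ is also $\prec$-minimal in $Q$, and since $a^j_y\le\ell^k_y\le\ell^{q_0}_y<a^{j-1}_y$ we still have $\ell^k\in B_h^j$, so $k\in\hat B_{i,j}$ and $k$ is contained in $H_{i,j}$, i.e.\ $k$ is processed in step~2.3. As $k$ is $\prec$-minimal in $Q$, no rectangle of $Q$ is dominated by $k$, so $Q$ is one of the competitors in the definition of $S_k(i,j)$; therefore $w(Q)\le S_k(i,j)=w_k+\max\{S(i,k),S(k,i)\}+S(k,j)$ by the Lemma, and by induction the right-hand side is precisely the value $m$ of step~2.3.1, whence $S(i,j)\ge m\ge w(Q)=\OPT(i,j)$. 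The crux is this paragraph: pinning down the region $H_{i,j}\setminus H_{i,j-1}$, arguing that an optimum not confined to the smaller harpoon must place some rectangle's corner in the tip strip $\hat B_{i,j}$, and then using onion-minimality so that the Lemma's bookkeeping---the ``not dominated by $k$'' clause and the $\max$ over the incomparable harpoons $H_{i,k}$ and $H_{k,i}$---matches $Q$ exactly; the rest is a routine induction once the Lemma is in hand.
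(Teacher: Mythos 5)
Your proof is correct and follows essentially the same route as the paper's: induction over harpoons via the recurrence $S(i,j)=\max\{S(i,j-1),\max_k S_k(i,j)\}$, with the Lemma supplying the value of $S_k(i,j)$ and an onion-minimal rectangle witnessing the second branch. If anything, your version is slightly more careful than the paper's, which picks an arbitrary $\prec$-minimal element of the optimum and splits on whether it lies in $H_{i,j-1}$, whereas you split on whether the whole optimum lies in $H_{i,j-1}$ and, if not, extract a $\prec$-minimal element guaranteed to sit in the strip $B_h^j$ --- which is the cleaner way to make that case analysis airtight.
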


\begin{proof}
By induction. For the trivial harpoons $H_{i,i}$, the maximum independent set has weight 0, because this set is empty.
The correctness of the theorem follows directly from the previous lemma and the next implications: For $i\neq j$,
\begin{align*}
i < j &\Longrightarrow S(i,j) = \max\left\{S(i,j-1),\max_{k\in \hat{B}^{j}_h \cap H_{i,j}}S_k(i,j)\right\}.\\
j < i &\Longrightarrow S(i,j) = \max\left\{S(i,f(j)),\max_{k\in \hat{B}^{j}_v \cap H_{i,j}}S_k(i,j)\right\}.
\end{align*}
Indeed, assume that $i<j$ (the case $i>j$ is analogous). Let $\mathcal{S}$ be the \MIS
corresponding to $S(i,j)$, and let $m \in \mathcal{S}$ be minimal with respect to domination.
If $m$ is in $H_{i,j-1}$ then $ S(i,j) = S(i,j-1)$. Otherwise, $m$ is in $\hat{B}^{j}_h$ and since $\mathcal{S}\setminus\{m\}$ does not contain
rectangles dominated by $m$,  $S(i,j)= S_{m}(i,j)$.
\end{proof}

\subsection{An approximation for diagonal-intersecting families}
We use the previous algorithm to get a 2-approximation for diagonal-intersecting rectangle families.
This improves upon the 6-approximation (which is only for the unweighted case) of Chepoi and Felsner~\cite{ChepoyF13}.

\begin{Theo}
There exists a 2-approximation polynomial algorithm for \WMIS on diagonal-intersecting rectangle families.
\end{Theo}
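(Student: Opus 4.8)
The plan is to split an arbitrary diagonal-intersecting family $\RR$ into two subfamilies according to how each rectangle meets the diagonal $D$, apply the exact algorithm of the previous subsection to each piece, and return the better of the two independent sets. Concretely, for each $r\in\RR$ the intersection $r\cap D$ is a (possibly degenerate) segment with higher endpoint $a^r$ and lower endpoint $b^r$. I would classify $r$ according to which of its four corners the diagonal "cuts off": since $D$ has slope $-1$ and enters $r$ through one side and leaves through another, the portion of $r$ strictly above $D$ is a corner triangle at either the upper-left or the upper-right corner of $r$ (and symmetrically the part below $D$ is a triangle at the lower-right or lower-left corner). Let $\RR_1$ be the rectangles whose part above $D$ is pinned at the upper-right corner, and $\RR_2=\RR\setminus\RR_1$ the rectangles whose part above $D$ is pinned at the upper-left corner; by symmetry the latter are exactly those whose part below $D$ is pinned at the lower-left corner.

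The key claim is that each $\RR_i$ (after possibly reflecting) is diagonal-lower-intersecting, so the $O(n^2)$ algorithm of Section~2.1 computes $\wmis(\RR_i)$ exactly. For $\RR_1$: if $r,r'\in\RR_1$ intersect, I want to show they share a point below $D$. The region of $r$ above $D$ is a triangle hanging off the corner $u^r$, bounded by the top and right sides of $r$ and by $D$; the only part of $r$ that lies above $D$ is contained in a small neighbourhood of $u^r$. If the intersection $r\cap r'$ lay entirely above $D$, then it would be contained in the upper-right triangular caps of both rectangles; but a short coordinate argument using slope $-1$ of $D$ and the fact that $r\cap r'$ is itself an axis-parallel rectangle forces $r\cap r'$ to meet $D$, hence to contain a point on (or below) $D$, giving a common point in the bottom halfplane. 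Thus $\RR_1$ is diagonal-lower-intersecting. For $\RR_2$, reflect the plane across the vertical axis $x=0$ (equivalently, relabel so that "upper-left" becomes "upper-right"): this is an intersection-graph-preserving transformation sending $D$ to the line $y=x$, then a further swap of coordinates brings us back to $D$ with $\RR_2$ now in the form where the part below $D$ is pinned at the lower-left corner, which is precisely the diagonal-lower-intersecting situation. Run the exact algorithm on this transformed family.

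Finally, since $\RR=\RR_1\cup\RR_2$, any independent set $\mathcal Q$ of $\RR$ decomposes as $\mathcal Q=(\mathcal Q\cap\RR_1)\cup(\mathcal Q\cap\RR_2)$, and $\mathcal Q\cap\RR_i$ is an independent set of $\RR_i$, so
\begin{align*}
\wmis(\RR)\;=\;w(\mathcal Q)\;=\;w(\mathcal Q\cap\RR_1)+w(\mathcal Q\cap\RR_2)\;\le\;\wmis(\RR_1)+\wmis(\RR_2).
\end{align*}
Hence $\max\{\wmis(\RR_1),\wmis(\RR_2)\}\ge\tfrac12\wmis(\RR)$, and outputting the heavier of the two exact solutions is a $2$-approximation. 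The total running time is $O(n^2)$ (two calls to the quadratic algorithm plus linear-time bucketing of the rectangles into $\RR_1,\RR_2$), so it is polynomial.

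The main obstacle is the geometric claim that each $\RR_i$ is diagonal-lower-intersecting — i.e., that two intersecting rectangles in $\RR_1$ necessarily share a point weakly below $D$. One has to rule out the possibility that $r\cap r'$ sits entirely in the open upper halfplane; the clean way is to note that $r\cap r'$ is an axis-parallel rectangle $[x_1,x_2]\times[y_1,y_2]$ contained in both upper-right caps, observe that its lower-left corner $(x_1,y_1)$ lies in $r\cap r'$, and show $(x_1,y_1)$ is weakly below $D$, i.e. $y_1\le -x_1$: this follows because $x_1$ is the larger of $\ell^r_x,\ell^{r'}_x$ and $y_1$ is the larger of $\ell^r_y,\ell^{r'}_y$, and for a rectangle in $\RR_1$ the left edge $x=\ell^r_x$ meets $D$ at height $-\ell^r_x$ which lies at or below the rectangle's lower edge is the wrong direction, so instead one argues via the bottom edge: the bottom side $y=\ell^r_y$ meets $D$ at $x=-\ell^r_y$, and since $r$'s cap above $D$ is at its upper-right corner, the bottom-left corner $\ell^r$ is weakly below $D$. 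Taking the coordinatewise maximum of two points weakly below a line of slope $-1$ need not stay below it in general, so the argument must instead use that this maximum still lies inside $r$ (hence inside $r$'s below-$D$ part, which for $r\in\RR_1$ is everything except a corner cap), and I would pin this down by checking the four corners of $r\cap r'$ directly against $D$. This is the one place where care with the slope-$-1$ geometry is essential; everything else is routine.
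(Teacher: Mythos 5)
Your high-level plan---split $\RR$ into two subfamilies each reducible to the diagonal-lower-intersecting case, solve each exactly with the quadratic algorithm, and return the heavier solution, losing a factor of $2$ by the averaging argument at the end---is exactly the paper's argument, and that final averaging step is fine. The gap is in the partition itself. For a line of slope $-1$, the part of a rectangle $r$ above $D$ is $\{p\in r: p_x+p_y\ge 0\}$, and the functional $p_x+p_y$ is maximized over $r$ at the upper-right corner $u^r$; hence the above-$D$ region, when nonempty, \emph{always} contains $u^r$ and is \emph{never} a triangle at the upper-left corner. It is a triangle at $u^r$ only when $D$ crosses the top and right sides of $r$; when $D$ crosses top and bottom, or left and right, it is a quadrilateral, and when $D$ crosses left and bottom it is a pentagon. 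So your $\RR_1$ captures only the top-to-right rectangles, and $\RR_2=\RR\setminus\RR_1$ is not the class you describe: it mixes top-to-bottom, left-to-right and left-to-bottom rectangles, and no single reflection turns it into a diagonal-lower-intersecting family. Concretely, take $A=[0,100]\times[-2,-1]$ (so $D$ crosses its top and bottom sides) and $B=[50,51]\times[-100,0]$ (so $D$ crosses its left and right sides); both lie in your $\RR_2$, yet every point of $A\cap B=[50,51]\times[-2,-1]$ satisfies $p_x+p_y\ge 48>0$, i.e.\ the two rectangles meet only strictly above $D$. Since $\mathcal{G}_{\text{low-int}}\subsetneq\mathcal{G}_{\text{int}}$ and \MIS is NP-hard on the larger class, an arbitrary leftover subfamily cannot be expected to be tractable; the partition must be chosen so that each part is genuinely a splitting family.

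The correct dichotomy, which is what the paper uses, is by \emph{side} rather than by corner: either $D$ crosses the top side of $r$, or it does not, in which case the upper endpoint $a^r$ of $r\cap D$ must lie on the left side. The first subfamily is diagonal-splitting with respect to the upper side, the second with respect to the left side (a reflection across $D$ maps one to the other), and the paper has already observed that a diagonal-splitting family has the same intersection graph as a diagonal-corner-separated, hence diagonal-lower-intersecting, family. Note also that your proposed verification for $\RR_1$ (checking that the lower-left corner of $r\cap r'$ lies weakly below $D$) does work for the top-to-right case but would not extend to the full top-crossing class---for a rectangle that $D$ crosses top-to-bottom, the lower-right corner lies weakly \emph{above} $D$---so for the corrected partition it is cleaner to argue as the paper does: if two top-splitting rectangles meet above $D$, the upper intersection point $a^r$ of the one whose top side is crossed further to the right already lies in both rectangles and on $D$.
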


\begin{proof}
Divide $\RR$ into two subsets: the rectangle that intersect the diagonal on their upper side, and the ones that don't. It is easy to see that every
rectangle in the second subset intersect the diagonal on its left side. Using symmetry, the left side case is equivalent to the upper side case.
Therefore we can compute in polynomial time a \WMIS in each subset. We output the heaviest one. Its weight is at least half of $\wmis(\RR)$. This
algorithm gives a 2-approximation
\end{proof}

\section{Duality gap and other approximation algorithms}\label{duality}
In this section we explore the duality gap, that is, the largest possible ratio between $\mhs$ and $\mis$, on some of the rectangle classes defined
before.

\begin{Theo} \label{teo:dual} The duality gap for diagonal-touching rectangle families is between 3/2 and~2.
For diagonal-lower-intersecting families it is between 2 and 3, and for diagonal-intersecting families it is between 2 and 4.
\end{Theo}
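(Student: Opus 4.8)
The plan is to prove six bounds, grouped into the three upper bounds (2 for diagonal-touching, 3 for diagonal-lower-intersecting, 4 for diagonal-intersecting) and the three lower bounds (3/2 for diagonal-touching, 2 for diagonal-lower-intersecting, 2 for diagonal-intersecting). For the \emph{upper bounds} I would use a single unified combinatorial argument and obtain the three constants from a common scheme. First, for diagonal-touching families: given an optimal independent set $\mathcal{S}$ with $|\mathcal{S}|=\mis(\RR)$, I would build a hitting set of size at most $2\,\mis(\RR)$ by the standard shifting/charging idea—since all upper-right corners lie on $D$, the rectangles behave essentially like intervals along $D$, and one can hit every rectangle of $\RR$ using two points ``associated'' to each rectangle of a maximal independent set (e.g.\ the corner on $D$ plus one auxiliary point), exploiting that any rectangle not in $\mathcal{S}$ must overlap some member of $\mathcal{S}$ along the diagonal. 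For diagonal-lower-intersecting families one loses a further factor coming from the two possible ``directions'' of intersection below $D$ (horizontal vs.\ vertical harpoons), giving $3\,\mis(\RR)$; and for diagonal-intersecting families one first splits $\RR$ into rectangles meeting $D$ on their upper/left side versus the rest (as in the 2-approximation theorem just proved), applies the bound on each part, and combines, yielding $4\,\mis(\RR)$. Each of these is ``a simple combinatorial argument'' as advertised in the abstract, so I expect the write-up to be short once the charging scheme is set up correctly.

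For the \emph{lower bounds}, the $3/2$ bound for diagonal-touching families is the Chepoi--Felsner example and I would simply cite/reproduce a small fixed configuration of rectangles all touching $D$ at their upper-right corner for which $\mhs/\mis = 3/2$; a $3\times 3$-type gadget should suffice. The main work is the lower bound of $2$ for diagonal-lower-intersecting families (which then immediately gives the same bound for the larger diagonal-intersecting class, since $\mathcal{G}_{\text{low-int}}\subsetneq\mathcal{G}_{\text{int}}$). The plan here is to exhibit an explicit infinite family $\RR_t$ of diagonal-lower-intersecting instances with $\mhs(\RR_t)/\mis(\RR_t)\to 2$ as $t\to\infty$. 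I would construct $\RR_t$ recursively: take a central rectangle, place two scaled copies of $\RR_{t-1}$ in the two ``lobes'' below the diagonal so that every rectangle of one copy intersects every rectangle of the other copy (forcing any independent set to live essentially in a single copy), while arranging that hitting all rectangles requires roughly twice as many points as the largest independent set—because each recursive level forces one ``extra'' point into the hitting set that no independent set can exploit. Then I would verify by induction that $\mis(\RR_t)$ grows like $c\cdot 2^t$ (or similar) while $\mhs(\RR_t)$ grows like $(2c-o(1))\cdot 2^t$, and confirm that the whole construction can be realized with rectangles whose pairwise intersections all occur below $D$ (the diagonal-lower-intersecting condition), which is the delicate geometric point.

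I would carry out the steps in this order: (1) state and prove the $\le 2$ upper bound for diagonal-touching families via the interval-like charging argument; (2) extend it to $\le 3$ and $\le 4$ by the direction-splitting and side-splitting reductions; (3) give the $3/2$ example for diagonal-touching; (4) present the recursive construction $\RR_t$, prove it is diagonal-lower-intersecting, and analyze $\mis(\RR_t)$ and $\mhs(\RR_t)$ to get the limit $2$; (5) observe that (4) also yields the lower bound $2$ for diagonal-intersecting families by class inclusion. The main obstacle I anticipate is step (4): designing the recursive gadget so that it \emph{simultaneously} (a) stays within the diagonal-lower-intersecting class, (b) forces independent sets to collapse to one recursive copy, and (c) makes the hitting set provably lose a factor approaching $2$; getting all three at once—especially reconciling (a) with the ``cross-intersection'' requirement in (b)—is where the real care is needed, and I would likely need to track the exact geometry of the harpoon regions to certify that no two rectangles from different levels meet above $D$.
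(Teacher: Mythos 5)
Your high-level plan (explicit hitting-set constructions for the upper bounds, explicit families for the lower bounds) matches the paper's, but both halves have concrete gaps. For the upper bounds, the paper does not use a charging scheme against a maximal independent set; it builds the grid $\mathcal{P}=P_x\times P_y$ from minimum hitting sets of the two interval projections (so $|P_x|,|P_y|\le \mis(\RR)$ by perfection of interval graphs) and then extracts the staircases $\mathcal{F}^-,\mathcal{F}^+$ closest to the diagonal (each of size $\le |P_x|+|P_y|-1\le 2\mis(\RR)-1$) and the set $\mathcal{F}^*$ of bending points of the upper staircase (of size $\le\max\{|P_x|,|P_y|\}\le\mis(\RR)$); the constants $2,3,4$ are $|\mathcal{F}^-|$, $|\mathcal{F}^-|+|\mathcal{F}^*|$, and $|\mathcal{F}^-|+|\mathcal{F}^+|$. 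Your route to $4$ --- split $\RR$ by which side meets $D$ and add the bounds for the two parts --- does not give $4$: each part is only diagonal-splitting, for which the best bound you would have at that point is $3$ (this is exactly how Chepoi--Felsner get $6$). The crucial point in the paper is that \emph{one} grid, built from the MIS of the \emph{whole} family, serves both halves simultaneously. Likewise your ``two directions of intersection'' justification for the constant $3$ is not an argument; in the paper the $3$ arises as $2+1$ from $|\mathcal{F}^-|+|\mathcal{F}^*|$ after replacing each rectangle by the minimal one containing its part below $D$ (making the family diagonal-corner-separated, so the portion above $D$ is a triangle and is necessarily hit by a bending point of the upper staircase).

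For the lower bound of $2$, your recursive construction is internally inconsistent: if every rectangle of one copy of $\RR_{t-1}$ meets every rectangle of the other, then any independent set lies in a single copy and $\mis(\RR_t)\le\mis(\RR_{t-1})+O(1)$ grows linearly in $t$, contradicting your claimed growth $c\cdot 2^t$; and if instead $\mhs$ grows like $2^t$ while $\mis$ grows linearly, the ratio would exceed the upper bound of $3$ you are simultaneously proving. The paper's construction is not recursive but layered: $k$ layers $\mathcal{L}_i=\{U(i),D(i),L(i),R(i)\}$ of four (unbounded) rectangles arranged so that no point of the plane hits more than two of them, whence $\mhs(\RR_k)=2k$, together with a case analysis on the first indices $i_D,i_R$ of a $D$- or $R$-rectangle in an independent set showing $\mis(\RR_k)\le k+2$. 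The $3/2$ lower bound in the paper is simply a $5$-cycle realized as a diagonal-touching family ($\mis=2$, $\mhs=3$), which is in the spirit of the small gadget you propose.
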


We will prove the upper bounds and the lower bounds separately.

\begin{proof}[Proof of the upper bounds in Theorem~\ref{teo:dual}]
Let $\RR$ be a rectangle family in the plane, that can be in one of the three classes described on the theorem.
In the case which $\RR$ is diagonal-lower-intersecting we first replace each rectangle $r\in \RR$ by the minimal one containing the region of $r$
that is below the diagonal.
The modified family has the same intersection graph as before, but it is diagonal-corner-separated. In particular, the region of each rectangle that
is above
the diagonal is a triangle or a single point.

We use $\RR_x$ and $\RR_y$ to denote the projections of the rectangles in $\RR$ on the $x$-axis and $y$-axis respectively.
Both $\RR_x$ and $\RR_y$ can be regarded as intervals, and so we can compute in polynomial time the minimum hitting sets, $P_x$ and $P_y$, and the
maximum independent sets, $\II_x$ and $\II_y$, of $\RR_x$ and $\RR_y$ respectively. Since interval graphs are perfect,
$|P_x|=|\II_x| \text{ and } |P_y|=|\II_y|$.

Furthermore, since rectangles with disjoint projections over the $x$-axis (resp. over the $y$-axis) are disjoint, we also have
\begin{align*}
\mis(\RR) \geq \max\{|\II_x|,|\II_y|\} = \max\{|P_x|,|P_y|\}.
\end{align*}
Observe that the collection $\mathcal{P}=P_x\times P_y \subset \Real^2$ hits every rectangle of $\RR$. From here we get the (trivial) bound
$\mhs(\RR)\leq |\mathcal{P}|\leq \mis(\RR)^2$ which holds for every rectangle family. When $\RR$ is in one of the classes studied in this paper, we
can improve the bound.

Let $\mathcal{P}^-$ and $\mathcal{P}^+$ be the sets of points in $\mathcal{P}$ that are below or above the diagonal, respectively. Consider the
following subsets of $\mathcal{P}$:
\begin{align*}
\mathcal{F}^- &= \{p \in \mathcal{P}^-\colon \nexists q \in \mathcal{P}^-\setminus\{p\}, p_x< q_x \text{ and } p_y< q_y\}. \\
\mathcal{F}^+ &= \{p \in \mathcal{P}^+\colon \nexists q \in \mathcal{P}^+\setminus\{p\}, q_x< p_x \text{ and } q_y< p_y\}. \\
\mathcal{F}^* &= \{p \in \mathcal{P}^+\colon \nexists q \in \mathcal{P}^+\setminus\{p\}, q_x\leq p_x \text{ and } q_y\leq p_y\}.
\end{align*}

The set $\mathcal{F}^-$ (resp. $\mathcal{F}^+$) forms the closest ``staircase'' to the diagonal that is below (resp. above) it. The set
$\mathcal{F}^*$ corresponds to the lower-left bending points of the staircase defined by $\mathcal{F}^+$. See Figure \ref{fig:F_sets}).

\begin{figure}[h]
\centering
\includegraphics[scale=1.3]{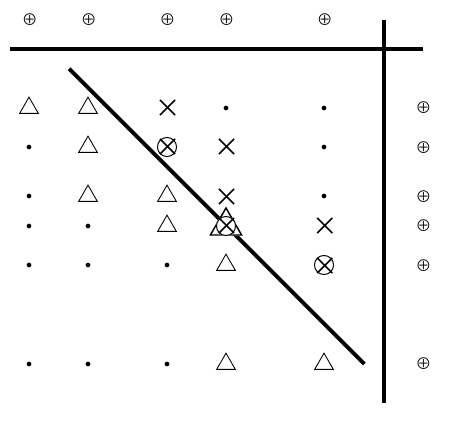}
\caption{\small We do not represent the rectangles but $\mathcal{P}_x$ and $\mathcal{P}_y$ (the plus into circles, along the axis). The points of $\mathcal{P}\setminus (\mathcal{F}^+ \cup \mathcal{F}^-)$ are the dots, $\mathcal{F}^-$ corresponds to the triangles, $\mathcal{F}^+$ corresponds to the 'x'-s, and $\mathcal{F}^*$ corresponds to the circles. Remark that a point can be in several sets.}\label{fig:F_sets}
\end{figure}

From here, it is easy to see that
\begin{align*}
\max\{|\mathcal{F}^-|,|\mathcal{F}^+| \} &\leq |P_x|+|P_y|-1 \leq 2 \mis(\RR) - 1.\\
|\mathcal{F}^*| &\leq \max\{|P_x|,|P_y|\} \leq \mis(\RR).
\end{align*}

If $r\in \RR$ is hit by a point of $\mathcal{P}^-$, let $p_1(r)$ be the point of $\mathcal{P}^-\cap r$ closest to the diagonal (in $\ell_1$-distance).
Since $r$ intersects the diagonal, and the points of $\mathcal{P}$ form a grid, we conclude that $p_1(r)\in\mathcal{F}^-$.
Similarly, if $r\in \RR$ is hit by a point of $\mathcal{P}^+$, let $p_2(r)$ be the point of $\mathcal{P}^+\cap r$ closest to the diagonal. Since
$r$ intersects the diagonal, we conclude that $p_2(r)\in\mathcal{F}^+$. Furthermore, if the region
of $r$ that is above the diagonal is a triangle, then $p_2(r) \in \mathcal{F}^*$.

If $\RR$ is diagonal-touching, then every rectangle is hit by a point of $\mathcal{F}^-$, and so $\mhs(\RR) \leq |\mathcal{F}^-| \leq 2 \mis(\RR) -
1$.
If $\RR$ is diagonal-lower-intersecting (and, after the modification discussed at the beginning of this proof, diagonal-corner-separated), then every
rectangle
is hit by a point of $\mathcal{F}^-\cup \mathcal{F}^*$, and so $\mhs(\RR) \leq |\mathcal{F}^-|+ |\mathcal{F}^*| \leq 3 \mis(\RR) - 1$.
Finally, if $\RR$ is diagonal-intersecting, then every rectangle is hit by a point of $\mathcal{F}^-\cup \mathcal{F}^+$, and so $\mhs(\RR) \leq
|\mathcal{F}^-|+|\mathcal{F}^+| \leq 4 \mis(\RR) - 2$.\end{proof}

\begin{proof}[Proof of the lower bounds of Theorem~\ref{teo:dual}]
The lower bound of 3/2 is achieved by any family $\RR$ whose intersection graph $G$ is a 5-cycle. It is easy to see that $\RR$ can be realized as a
diagonal-touching family, that $\mis(\RR)=2$ and $\mhs(\RR)=3$, and so the claim holds.

The lower bound of $2$ for diagonal-lower-intersecting and diagonal-intersecting families is asymptotically attained by a sequence of rectangle
families $\{\RR_k\}_{k\in \ZZ^+}$. We will describe the sequence in terms of infinite rectangles which intersect the diagonal, but it is easy to transform each $\RR_k$ into a family of
finite ones by considering a big bounding box.

For $i \in \ZZ^+$, define the $i$-th layer as $\mathcal{L}_i = \{U(i), D(i), L(i), R(i)\}$, and for $k \in \ZZ^+$, define the $k$-th instance as
$\RR_k = \bigcup_{i=1}^k \mathcal{L}_i$, where:
\begin{align*}
U(i)&=[2i,2i+1]\times [-(2i+\tfrac13),+\infty), &D(i)&=[2i+\tfrac23,2i+\tfrac53]\times (-\infty,-2i]\\
L(i)&=(-\infty, 2i+\tfrac13]\times [-2i-1,-2i], &R(i)&=[2i, \infty)\times [-(2i+\tfrac53),-(2i+\tfrac23)].
\end{align*}

\begin{figure}[h]
\centering
\includegraphics[scale=0.8]{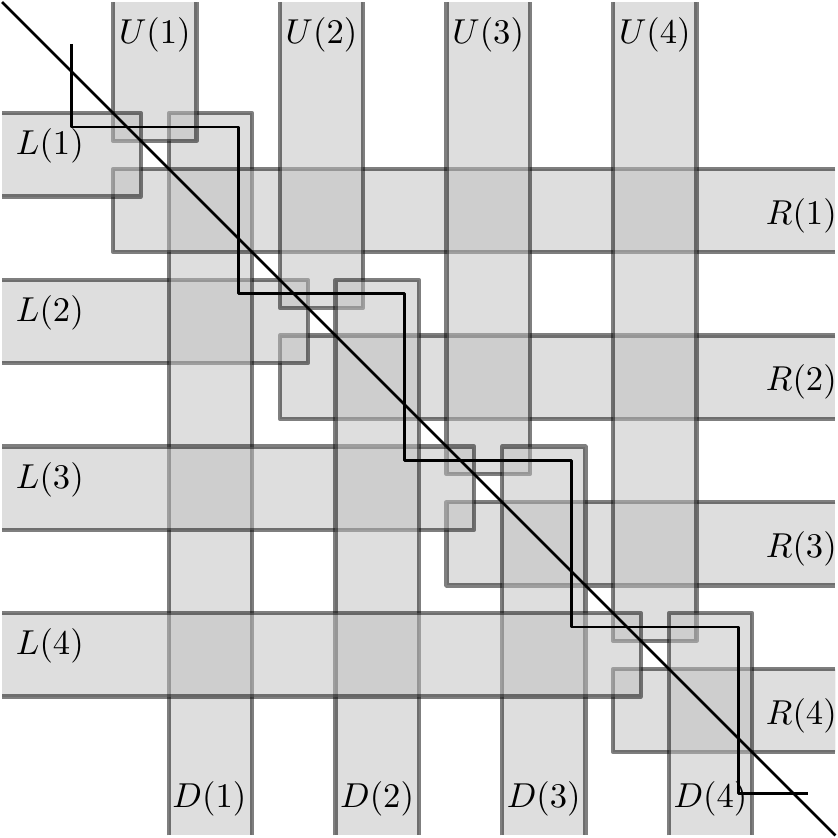}
\caption{\small The family $\RR_4$. The diagonal line shows this family is diagonal-intersecting. The staircase line shows that it is actually
lower-diagonal-intersecting.}\label{fig:gap2}
\end{figure}

Consider the instance $\RR_k$ depicted in Figure~\ref{fig:gap2} with $k$ layers
of rectangles. $\RR_k$ can be easily transformed into a diagonal-lower-intersecting
family by ``straightening'' the staircase curve shown in the figure
without changing its intersection graph. Let $I$ be a maximum independent set of rectangles
in that instance. It is immediately clear that a minimum hitting set
has size $2k$ since no point in the plane can hit more that two
rectangles.

Let us prove that the size of a maximum independent set is at most
$k+2$, amounting to conclude that the ratio is arbitrarily close to 2.
To this end, we let $i_D=\min\{i: D(i)\in I\}$ and  $i_R=\min\{i: R(i)\in I\}$, and if no $D(i)\in I$ or no $R(i)\in I$, we let $i_D=k+1$ or
$i_R=k+1$,
respectively. When $i_D=i_R=k+1$, it is immediate that $|I|\leq k$. Assume then, without loss of generality, that $i_D<i_R$.

Since for $i=1,\ldots, i_D-1$ the set $I$ neither contains rectangle
$D(i)$ nor $R(i)$, we have that $I$ contains at most one rectangle on each of these layers.
It follows that $|I \cap \cup_{i=1}^{i_D-1} \mathcal{L}_i|\le i_D-1$. Similarly, for $i=i_D+1,\ldots, i_R-1$ the
set $I$ neither contains rectangle $L(i)$ nor $R(i)$, thus $|I \cap
\cup_{i=i_D+1}^{i_R-1} \mathcal{L}_i|\le i_R-i_D-1$.  Finally, we have that for
$i=i_R+1,\ldots, k$ the set $I$ neither contains rectangle $L(i)$ nor
$U(i)$, and on layer $i_R$, $I$ contains at most 2 rectangles; thus $|I \cap \cup_{i=i_R}^{k} \mathcal{L}_i|\le k- i_R+2$. To
conclude, note that $I$ may contain at most 2 rectangles of layer $i_D$, then
\begin{equation*}
|I| = \sum_{i=1}^k |I \cap \mathcal{L}_i| \le i_D - 1+i_R - i_D - 1 + k -  i_R +2 +2 = k+2.\qedhere
\end{equation*}
\end{proof}
\begin{Coro} There is a simple 2-approximation polynomial time algorithm for \MHS on diagonal-touching families, a 3-approximation for \MHS on
diagonal-lower-intersecting families, and a 4-approximation polynomial time algorithm for \MHS on diagonal-intersecting families.
\end{Coro}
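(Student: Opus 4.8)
The plan is to observe that the proof of the upper bounds in Theorem~\ref{teo:dual} is entirely \emph{constructive}: for each of the three classes it does not merely bound $\mhs(\RR)$, it exhibits a concrete hitting set whose cardinality is bounded in terms of $\mis(\RR)$. Concretely, for a diagonal-touching family the set $\mathcal{F}^-$ is a hitting set with $|\mathcal{F}^-|\le 2\mis(\RR)-1$; for a diagonal-lower-intersecting family (after replacing each rectangle by the minimal one containing its part below the diagonal, which leaves the intersection graph unchanged) the set $\mathcal{F}^-\cup\mathcal{F}^*$ is a hitting set of size at most $3\mis(\RR)-1$; and for a diagonal-intersecting family the set $\mathcal{F}^-\cup\mathcal{F}^+$ is a hitting set of size at most $4\mis(\RR)-2$. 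Since $\mis(\RR)\le\mhs(\RR)$ for every rectangle family, each of these hitting sets has size at most $2$, $3$, respectively $4$ times $\mhs(\RR)$, so outputting it yields the claimed approximation factors.

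The second step is to check that producing these hitting sets takes polynomial time. Computing the projections $\RR_x$, $\RR_y$ is immediate; the minimum hitting sets $P_x$, $P_y$ of the interval families $\RR_x$, $\RR_y$ can be found in polynomial (indeed near-linear) time by the classical greedy interval-stabbing algorithm, which also certifies $|P_x|=|\II_x|$ and $|P_y|=|\II_y|$ since interval graphs are perfect. The grid $\mathcal{P}=P_x\times P_y$ has $|P_x|\cdot|P_y|=O(\mis(\RR)^2)=O(n^2)$ points, and the staircase sets $\mathcal{F}^-$, $\mathcal{F}^+$, $\mathcal{F}^*$ are extracted from $\mathcal{P}$ by a straightforward dominance sweep in $O(|\mathcal{P}|\log|\mathcal{P}|)$ time. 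The preprocessing required in the diagonal-lower-intersecting case, namely shrinking each rectangle to the minimal box containing its sub-diagonal part, is an $O(n)$ operation. Hence the whole procedure runs in polynomial time, and the algorithm simply returns the appropriate set among $\mathcal{F}^-$, $\mathcal{F}^-\cup\mathcal{F}^*$, $\mathcal{F}^-\cup\mathcal{F}^+$ depending on which class $\RR$ belongs to.

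There is essentially no hard part here, since the corollary is a bookkeeping consequence of Theorem~\ref{teo:dual}; the only point worth flagging is that the algorithm never computes $\mis(\RR)$ or $\mhs(\RR)$ themselves (and indeed $\mhs$ may be NP-hard to compute). This is harmless: the approximation guarantee is obtained a posteriori from the inequality chains of Theorem~\ref{teo:dual} together with weak duality $\mis(\RR)\le\mhs(\RR)$, without the algorithm ever needing to know the optimum. One may also remark that the factor-$4$ bound improves on the factor-$6$ algorithm of Chepoi and Felsner~\cite{ChepoyF13} for diagonal-intersecting families, and that the same sets serve equally as $2$-, $3$-, $4$-approximations for \MIS, although the exact \WMIS algorithm of Section~2 supersedes them on the first two classes.
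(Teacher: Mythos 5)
Your proposal is correct and follows the same route as the paper: the paper's proof of the corollary is exactly ``compute and return $\mathcal{F}^-$, $\mathcal{F}^-\cup\mathcal{F}^*$, or $\mathcal{F}^-\cup\mathcal{F}^+$ according to the class,'' with the approximation guarantee inherited from the upper-bound proof of Theorem~\ref{teo:dual} together with weak duality $\mis(\RR)\le\mhs(\RR)$. Your additional remarks on the polynomial-time computability of $P_x$, $P_y$ and the staircase sets are accurate and merely make explicit what the paper leaves implicit.
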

\begin{proof}
The algorithm consists in computing and returning $\mathcal{F}^-$ for the first case, $\mathcal{F}^-\cup \mathcal{F}^*$ for the second one, and
$\mathcal{F}^-\cup \mathcal{F}^+ $ for the third one.
\end{proof}

\section{NP-hardness of \MIS for diagonal-intersecting families}\label{sec:hardness}

In this section we prove the following theorem.
It is worth noting that the class of rectangles it refers to is not the class of diagonal-touching rectangles: some of the rectangles may touch the diagonal on its lower-left corner while others may touch it on its upper-right one.

\begin{Theo}\label{theo:MISR-hardness}
The \MIS problem is NP-hard on diagonal-intersecting families of rectangles, even if the diagonal intersects each rectangle on a corner.
\end{Theo}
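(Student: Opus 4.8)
The plan is to reduce from \textsc{Planar 3-sat}, which is NP-complete even when the incidence graph of variables and clauses has a planar embedding in which all variables lie on a horizontal line and clauses connect to them from above or below. The key observation is that such a planar embedding can be "drawn" along the diagonal line $D$: we place a gadget for each variable and each clause in a thin neighborhood of $D$, using rectangles that touch $D$ exactly on a corner (either the upper-right or the lower-left one), so that the resulting family is diagonal-intersecting. Since the diagonal is a one-dimensional curve, routing wires between gadgets without unwanted intersections is exactly where the planarity of the instance is used: crossings in the plane would force rectangles to overlap, but a planar instance can be laid out so that no two wires cross.

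First I would build a \emph{wire gadget}: a path of rectangles, alternately touching $D$ on their lower-left and upper-right corners, such that consecutive rectangles in the path intersect (below or above $D$) and non-consecutive ones do not. A maximum independent set of such a path of length $2m$ has size exactly $m$, and there are precisely two ways to achieve it (take all "even" rectangles or all "odd" ones); this parity encodes the truth value being propagated along the wire. Next I would design a \emph{variable gadget} as a closed cycle of wire rectangles of even length, so that its two maximum independent sets correspond to the two truth assignments of the variable, and I would attach several wires leaving the gadget, all carrying the same value. Then I would design a \emph{clause gadget} on three incoming wires that admits its full "local" independent set if and only if at least one of the three incoming wires arrives in the "true" state — the standard trick is a small constant-size configuration of rectangles that can be hit/avoided in the desired way only under a satisfying local pattern. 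Finally, I would set a global target $t$ equal to the sum of the local maxima over all gadgets, and argue that the family has an independent set of size $\ge t$ if and only if the formula is satisfiable.

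The correctness argument has two directions. For the forward direction, given a satisfying assignment, one selects in each wire and each variable cycle the independent set of the appropriate parity, and in each clause gadget the local solution allowed by the satisfied literal; one checks these choices are globally compatible because wires only interact with their endpoint gadgets, which is guaranteed by the planar layout along $D$. For the reverse direction, one shows any independent set of size $\ge t$ must be locally maximum in every gadget (since the local maxima sum to exactly $t$ and no gadget can exceed its local maximum), hence every wire is in a pure parity state, these states are consistent around variable cycles, and the clause gadget being locally maximum forces one of its literals to be true; reading off the variable states yields a satisfying assignment. Throughout, one must verify that every rectangle used meets $D$ on a single corner and that the only intersections among rectangles are the intended ones — this is a routine but somewhat tedious geometric bookkeeping.

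The main obstacle I expect is the \emph{routing / crossing-free layout} along the diagonal: unlike a generic planar NP-hardness reduction where one has the whole plane to route wires, here all gadgets must be squeezed into a narrow band around the one-dimensional curve $D$, and each rectangle is constrained to touch $D$ on a corner, which severely limits how wires can bend and how gadgets can be interconnected. Making the variable gadget fan out to an arbitrary number of clauses, and making clause wires reach down (or up) to their three variables without two wires being forced to cross in the plane, requires carefully exploiting the specific planar structure of the \textsc{Planar 3-sat} incidence graph (e.g.\ the two-page / variables-on-a-line form) and translating it into a left-to-right ordering of gadgets along $D$. Designing a clean clause gadget that works with corner-touching rectangles — and proving it has the exact combinatorial profile claimed (local maximum iff locally satisfied) — is the second most delicate point; everything else is standard gadget-reduction bookkeeping.
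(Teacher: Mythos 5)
Your high-level plan coincides with the paper's: a reduction from \textsc{Planar 3-sat} in the variables-on-a-line form, a variable gadget that is a cycle of corner-touching rectangles with exactly two maximum independent sets (even/odd parity encoding the truth value), clause gadgets that attain their local maximum if and only if a literal is satisfied, and a global target equal to the sum of the local maxima, with the two directions of correctness argued exactly as you describe. So the strategy is right. However, your write-up explicitly defers the two components that constitute essentially all of the work --- the clause gadget and the crossing-free layout --- with phrases like ``the standard trick is a small constant-size configuration'' and ``requires carefully exploiting the specific planar structure.'' As it stands this is a proof plan, not a proof: nothing in the proposal certifies that a corner-touching clause gadget with the required combinatorial profile exists, nor that the fan-out from a variable to all its clauses can be realized without forced crossings.

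For comparison, here is how the paper fills those gaps. It does \emph{not} route wires as paths of rectangles along the diagonal; instead it starts from the Knuth--Raghunathan embedding in which each clause is a three-legged comb, and realizes each comb directly as nine thin rectangles (three vertical legs plus six horizontal pieces) all meeting $D$, whose maximum independent set has size exactly $4$ and must contain a vertical leg. The leg for literal $z$ reaches down (or up) to the variable gadget and intersects a single square of the variable's cycle, of even or odd index according to the sign of $z$; this is what couples clause satisfaction to the parity chosen on the variable cycle. The fan-out problem is solved by sizing the variable cycle as a ``necklace'' of $12\,d(v)+2$ squares, partitioned into four arcs of $2\,d(v)$ squares each reserved for the four possible directions of approach, which guarantees every square meets at most one leg; and the crossing problem is solved by a preliminary modification of the embedding (inverting one leg of each comb and reordering contact points) rather than by any routing along $D$. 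If you want to complete your proposal, these are the constructions you would need to supply, and the clause gadget in particular is not a generic ``standard trick'': it must be checked that nine thin rectangles touching $D$ on corners really do have independence number $4$ with every maximum independent set containing a leg, and that statement (4) of the paper (local optimum of $Q_v\cup Q_C$ forces the correct parity) holds.
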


\begin{proof}
We use a reduction from the \pb{Planar 3-SAT} which is NP-complete~\cite{Lichtenstein1982}.
The input of the \pb{Planar 3-SAT}
consists of a Boolean formula $\varphi$ in 3-CNF whose associated
graph is planar, and the formula is accepted if and only if there
exists an assignment to its variables such that in each clause
at least one literal is satisfied.
Let $\varphi$ be a planar 3-SAT formula.
The (planar) graph associated with $\varphi$ can
be represented in the plane as in Figure~\ref{fig:planar3SAT-sample},
where all variables lie on an horizontal line,
and all clauses are represented by {\em non-intersecting} three-legged combs~\cite{Knuth1992}.
We identify each clause with its corresponding comb, and vice versa.
Using this embedding as base, which can be constructed in a grid of polynomial size~\cite{Knuth1992},
we construct a set $\mathcal{R}$ of rectangles intersecting the diagonal $D$,
such that there exists in $\mathcal{R}$ an independent set of some given number of rectangles
if and only if $\varphi$ is accepted. Such a construction of $\mathcal{R}$
follows ideas of Caraballo et al.~\cite{caraballo2013}.

\begin{figure}[h]
	\centering	
	\includegraphics[scale=1]{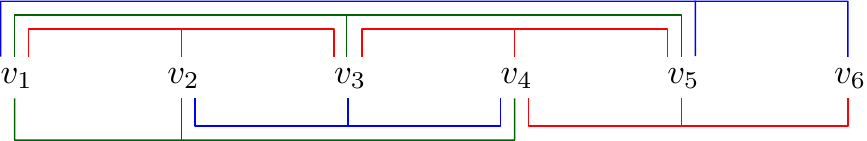}		
	\caption{\small{Planar embedding of $\varphi=(v_1 \lor \overline{v_2} \lor v_3)
	\land (v_3 \lor \overline{v_4} \lor \overline{v_5})\land(\overline{v_1} \lor \overline{v_3} \lor v_5)
	\land(v_1 \lor \overline{v_2} \lor v_4)\land(\overline{v_2} \lor \overline{v_3} \lor \overline{v_4})
	\land(\overline{v_4} \lor v_5 \lor \overline{v_6})\land(\overline{v_1} \lor v_5 \lor v_6)$.}}
\label{fig:planar3SAT-sample}
\end{figure}

Let $\varphi$ be an instance of the \pb{Planar 3-SAT},
with $n$ variables and $m$ clauses, and let $E_0$ denote
the above embedding of $\varphi$.
For any variable $v$,
let $d(v)$ denote the number of clauses in which $v$ appears.
We assume that every variable appears in each clause at most once.
Given any clause $C$ with variables $u$, $v$, and $w$, such that
$u$, $v$, and $w$ appear in this order from left to right in the embedding $E_0$,
we say that $u$ is the {\em left} variable of $C$,
and that $w$ is the {\em right} variable.
Given $E_0$, using simple geometric transformations we can obtain the next slightly different
planar embedding $E_1$ of $\varphi$. Essentially, $E_1$ can be obtained
from $E_0$ by arranging the variables in the diagonal $D$ and extending the combs.
Such an embedding $E_1$ has the next further properties (see Figure~\ref{fig:np-hard-1}):

\begin{figure}[ht]
	\centering
	\includegraphics[scale=0.5,page=1]{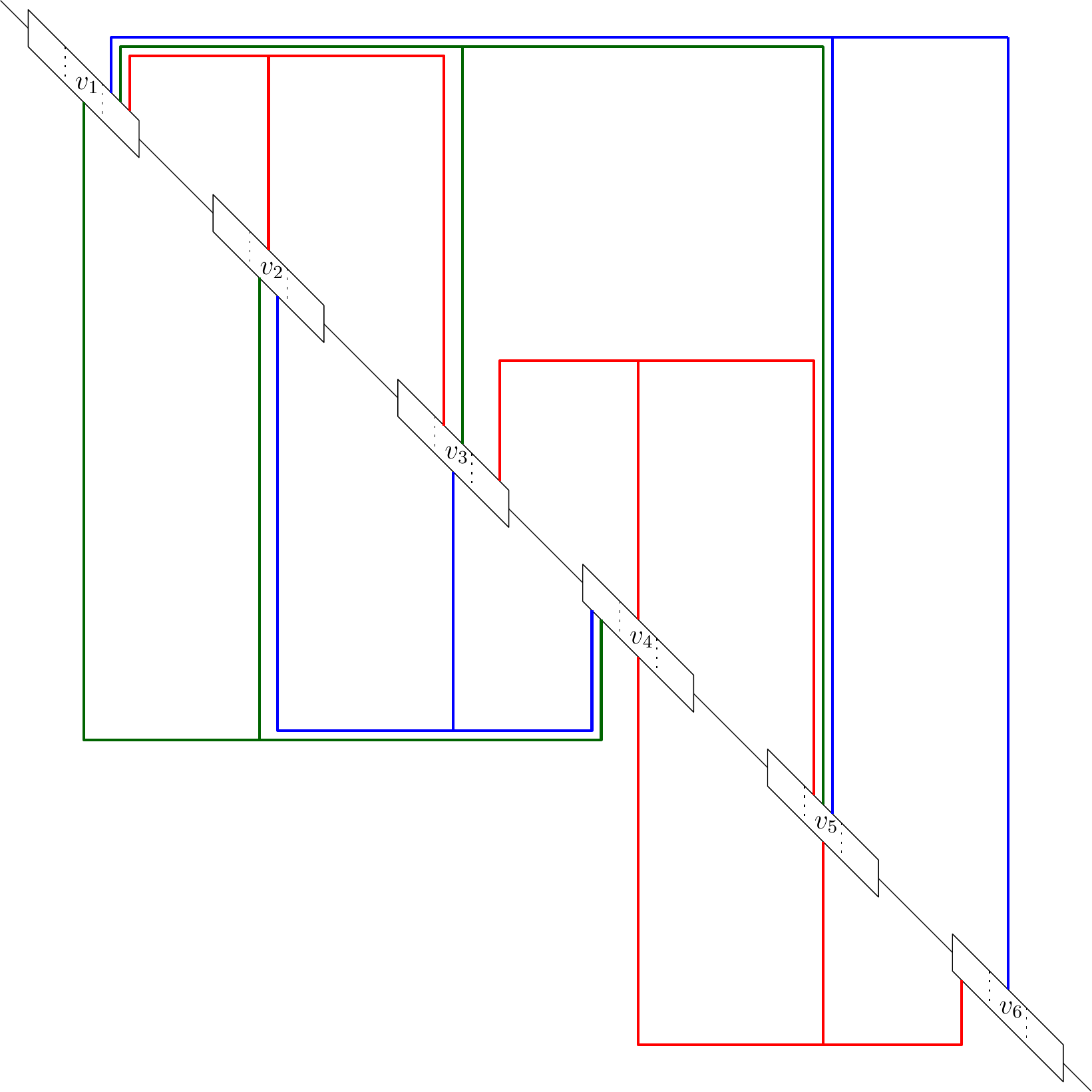}
	\caption{\small{The embedding $E_1$ obtained (essentially) from $E_0$ by arranging the variables
	as segments in the diagonal $D$ and extending the combs. For clarity, each variable segment is
	represented by a parallelogram.}}
	\label{fig:np-hard-1}
\end{figure}

\begin{itemize}[leftmargin=17pt]
\item[(1)] Each variable $v$ is represented by a segment $S_v\subset D$, divided
into three equal parts: $S^{\ell}_v$ is the left part, $S^m_v$ the middle one, and
$S^r_v$ the right one. The segments $S_v$'s are pairwise disjoint and equally spaced
in $D$, and appear in $D$ from left to right as the $v$'s appear in $E_0$. Let $\delta$
denote the vertical gap between successive segments $S_v$'s.

\item[(2)] If the variable $v$ is the left variable of a clause $C$ above $D$,
then the left leg of $C$ (i.e.\ the one corresponding to $v$)
contacts the interior of $S^r_v$. Otherwise, it contacts the interior of $S^m_v$.

\item[(3)] If the variable $v$ is the right variable of a clause $C$ below $D$,
then the right leg $C$ (i.e.\ the one corresponding to $v$)
contacts the interior of $S^{\ell}_v$. Otherwise, it contacts the interior of $S^m_v$.
\end{itemize}

The above properties (1)-(3) of $E_1$ allows us to obtain the next embedding $E_2$ of $\varphi$
(refer to Figure~\ref{fig:np-hard-2}).
Let $v$ be any variable. 
Let $C_1,C_2,\ldots,C_k$ be the clauses above the diagonal having $v$ as left variable, sorted
according to the {left-to-right} order of the contact points of their left legs with $S^r_v$.
Let $s_1,s_2,\ldots,s_k$ denote the horizontal segments of $C_1,C_2,\ldots,C_k$, respectively.
Assume w.l.o.g.\ that $s_1,s_2,\ldots,s_k$ are equally
spaced at a distance less than $\delta/2k$. Push downwards simultaneously $s_1,s_2,\ldots,s_k$ to
modify $C_1,C_2,\ldots,C_k$
so that $s_1$ is now below $S_v$, the vertical gap between $s_k$ and $S_v$ is less than $\delta/2$,
and the left legs of $C_1,C_2,\ldots,C_k$ are inverted and make contact with $S^r_v$ from below.
Further modifying $C_1,C_2,\ldots,C_k$ by inverting the left-to-right order of the contact
points of $C_1,C_2,\ldots,C_k$ with $S^r_v$,
$C_1,C_2,\ldots,C_k$ become pairwise disjoint.
Proceed similarly (and symmetrically) with the clauses below the diagonal having $v$ as right variable.
It can be verified that this new embedding $E_2$ has no crossings
among the combs and variable segments.

\begin{figure}[t]
	\centering
	\includegraphics[scale=0.5,page=2]{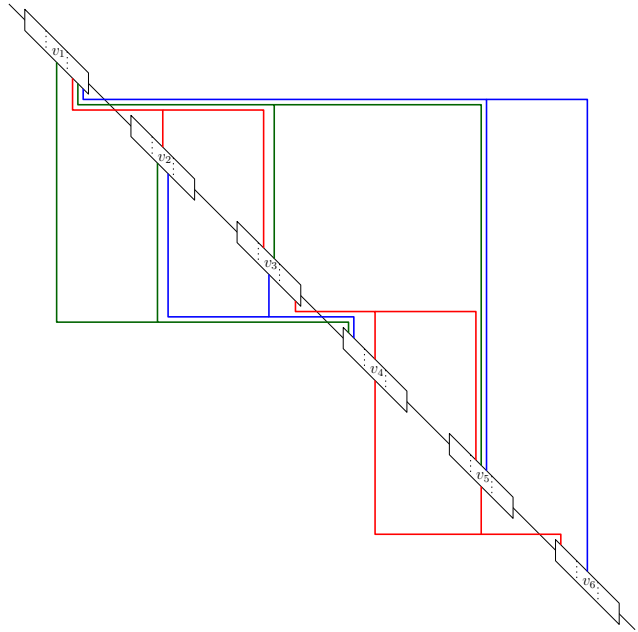}
	\caption{\small{The embedding $E_2$ obtained from $E_1$ by modifying the combs
	and inverting one leg.}}
	\label{fig:np-hard-2}
\end{figure}

Using the embedding $E_2$, we construct a set $\mathcal{R}$ or rectangles via variable gadgets and
clause gadgets.
\smallskip
\paragraph*{\textbf{Variable gadgets:}} For each variable $v$, the segment $S_v$ is replaced by
a {\em necklace} $Q_v$ of $12\cdot d(v)+2$ squares (their intersection graph is a cycle), so that each square intersects the diagonal~$D$ on a corner and only consecutive squares pairwise intersect
(see Figure~\ref{fig:np-hard-3}). We number these squares consecutively in clockwise order, starting from the topmost one which is numbered 0.
Let $Q^1_v\subset Q_v$ be the first top-down $2\cdot d(v)$ squares above $D$,
$Q^2_v\subset Q_v$ the second top-down $2\cdot d(v)$ squares above $D$,
$Q^3_v\subset Q_v$ the first bottom-up $2\cdot d(v)$ squares below $D$,
and $Q^4_v\subset Q_v$ the second bottom-up $2\cdot d(v)$ squares below $D$.
Since any clause can contact: $S^{\ell}_v$ from above, $S^{r}_v$ from below, and $S^m_v$ from
either above or below, we identify $S^{\ell}_v$ with $Q^1_v$,
$S^m_v$ with $Q^2_v\cup Q^4_v$, and $S^{r}_v$ with $Q^3_v$.

\begin{figure}[ht]
	\centering
	\includegraphics[scale=0.6,page=3]{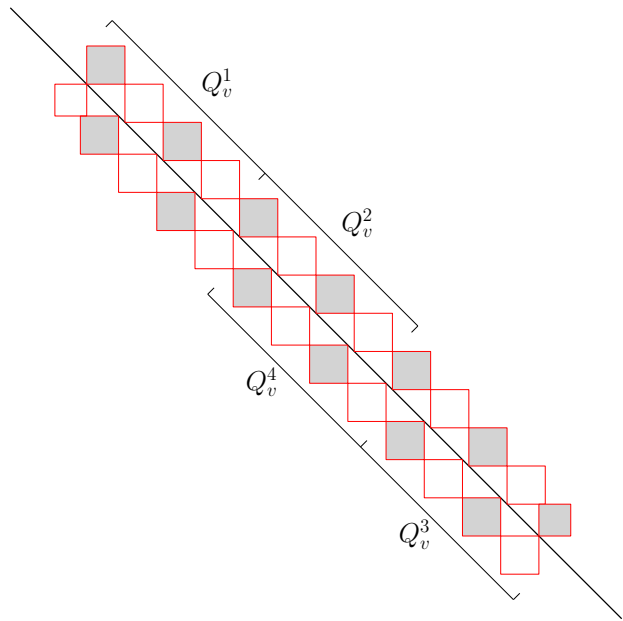}
	\caption{\small{The variable gadget for the variable $v$. The even numbered
	squares are shaded.}}
	\label{fig:np-hard-3}
\end{figure}

\smallskip\paragraph*{\textbf{Clause gadgets:}} Let $C$ be a clause with variables $u$, $v$, and $w$, appearing
in this order from left to right in $E_0$.
We represent $C$ by the set $Q_C$ of nine {\em thin} rectangles, three vertical and six horizontal,
as in Figure~\ref{fig:np-hard-4}.
The vertical rectangles of $Q_C$ represent the three legs of $C$, and
for $z=u,v,w$ the vertical rectangle corresponding to $z$ intersects a unique rectangle $R_z$ of $Q_v$,
and $D$ as well, so that $R_z$ is even numbered if and only if $z$ appears as positive in $C$.
Furthermore, if $C$ is above $D$ in $E_1$ then $R_u\in Q^3_u$, $R_v\in Q^2_v$, and $R_w\in Q^2_w$.
Otherwise, if $C$ is below $D$ in $E_1$ then $R_u\in Q^4_u$, $R_v\in Q^4_v$, and $R_w\in Q^1_w$.
Observe that since for every variable $z$ each of the sets $Q^1_z$, $Q^2_z$, $Q^3_z$, $Q^4_z$ contains
$2\cdot d(z)$ squares, we can guarantee that each square of the variable gadgets
is intersected by at most one vertical rectangle of the clause gadgets.

\begin{figure}[h!]
	\centering
	\includegraphics[scale=0.48,page=4]{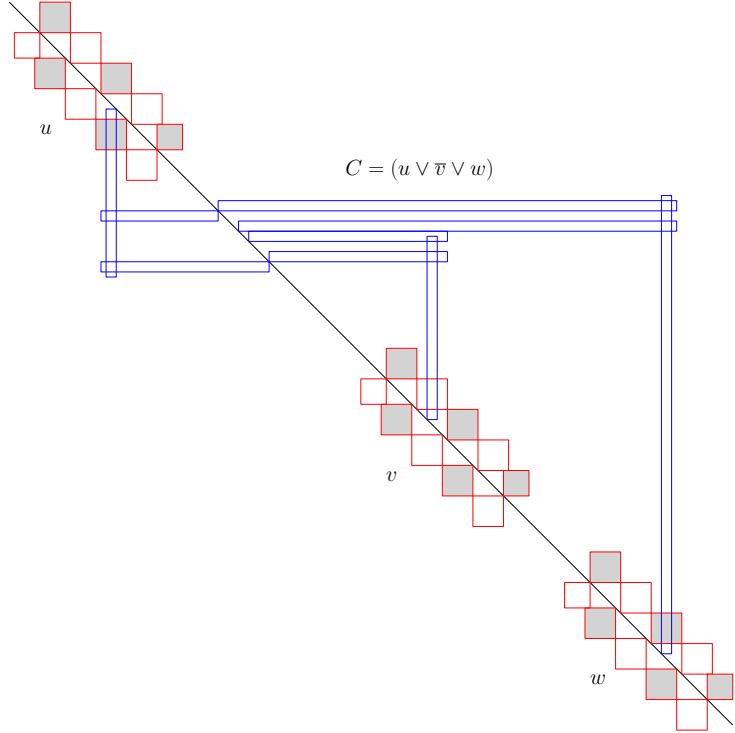}
	\caption{\small{Clause gadget for the clause $C=(u\lor\overline{v}\lor w)$.
	The variable $u$ is positive in $C$ and the vertical rectangle of $C$ corresponding
	to $u$ intersects $D$ and an even numbered square of $Q^3_u$.
	The variable $v$ appears negative and the vertical rectangle of $C$ corresponding
	to $v$ intersects $D$ and an odd numbered square of~$Q^4_v$.
	The variable $w$ appears positive and the vertical rectangle of $C$ corresponding
	to $w$ intersects $D$ and an even numbered square of~$Q^3_w$.
	}}
	\label{fig:np-hard-4}
\end{figure}

\smallskip\paragraph*{\textbf{Reduction:}}
Observe that in each variable $v$, the set $Q_v$ has exactly two maximum independent sets
of rectangles of size $6\cdot d(v) + 1$: the set $Q_{v,0}\subset Q_v$ of the even-numbered squares and the set
$Q_{v,1}\subset Q_v$ of the the odd-numbered squares.
We consider that $v=1$ if we select $Q_{v,1}$ as a maximum
independent set of rectangles of $Q_v$, and consider $v=0$ if $Q_{v,0}$ is selected.
Observe that the next statements are satisfied:
\begin{itemize}[leftmargin=17pt]
\item[(1)] if $v=1$ then the vertical rectangles of the clause gadgets in which
$v$ appears as positive, together with $Q_{v,1}$, form an independent set of rectangles.

\item[(2)] if $v=0$ then the vertical rectangles of the clause gadgets in which
$v$ appears as negative, together with $Q_{v,0}$, form an independent set of rectangles.

\item[(3)] For each clause $C$, any maximum independent set of rectangles
of $Q_C$ has size 4, and among its elements there must be a vertical  rectangle.

\item[(4)] For each clause $C$ and each variable $v$ in $C$: there exists an independent
set of size $(6\cdot d(v)+1)+4$ in $Q_v \cup Q_C$, such that the vertical
rectangle of $C$ corresponding to $v$ is selected, if and only if either $v$ appears as positive in $C$
and $Q_{v,1}$ is selected or $v$ appears as negative in $C$
and $Q_{v,0}$ is selected.
\end{itemize}
Let $\mathcal{R}$ be the set of the rectangles of all variable gadgets and clause gadgets.
From the above observations, we claim that $\varphi$ can be accepted if and only if
$\mathcal{R}$ has an independent set of exactly $\sum_{v}(6\cdot d(v)+1)+4m$ rectangles.

Therefore, the MIS problem is NP-hard on diagonal-intersecting families of rectangles (even if the diagonal intersects each rectangle on a corner) since $\mathcal{R}$ is a family of such rectangles.
\end{proof}

\section{The duality gap of general rectangle families is $O((\log \log (\mis))^2)$}
\label{general-duality}
In this section, we prove that the duality gap of general rectangle families is $O((\log \log (\mis))^2)$. This observation is a simple application of the results in \cite{Chalermsook09} and \cite{Aronov10} however, as far as we know, it is not a known result.

\begin{Theo}\label{teo:loglog} For every rectangle family $\RR$
\begin{equation*}
\frac{\mhs(\RR)}{\mis(\RR)} \leq O\bigl((\log \log \mis(\RR))^2\bigr).
\end{equation*}
\end{Theo}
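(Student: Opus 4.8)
The plan is to use the natural LP as a bridge between $\mis(\RR)$ and $\mhs(\RR)$: one $\log\log$ factor will come from bounding $\mhs$ against the LP optimum via the $\varepsilon$-net theorem of Aronov et al.~\cite{Aronov10}, the other from bounding the LP optimum against $\mis$ via the LP-rounding of Chalermsook and Chuzhoy~\cite{Chalermsook09}, and the product of the two factors is the claimed $(\log\log)^2$. The argument of the $\log\log$ then becomes $\mis(\RR)$ for free, thanks to the crude bound $\mhs(\RR)\le\mis(\RR)^2$ already derived in the proof of the upper bounds of Theorem~\ref{teo:dual}.

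Concretely, I would let $\LP=\LP(\RR)$ be the optimum of the natural relaxation of $\MIS$, namely $\max\{\sum_{r\in\RR}x_r:\ \sum_{r\ni p}x_r\le 1\ \text{for every }p\in\Real^2,\ x\ge 0\}$; this is effectively a finite program, its constraints being indexed by the $O(n^2)$ cells of the arrangement of $\RR$, and its dual is the fractional hitting-set program $\min\{\sum_p y_p:\ \sum_{p\in r}y_p\ge 1\ \text{for every }r\in\RR,\ y\ge 0\}$. Since every integral independent set is primal-feasible and every integral hitting set is dual-feasible, LP duality gives $\mis(\RR)\le\LP\le\mhs(\RR)$. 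For the first half of the bound, take an optimal fractional hitting set $y$, which may be assumed supported on a finite point set $P$ (one representative point per cell), with $\sum_{p\in P}y_p=\LP$ and $\sum_{p\in r}y_p\ge 1$ for all $r$; then $y/\LP$ is a probability distribution on $P$ under which every rectangle of $\RR$ has mass at least $\varepsilon:=1/\LP$. By Aronov et al.~\cite{Aronov10}, the range space of axis-parallel rectangles admits $\varepsilon$-nets of size $O(\tfrac1\varepsilon\log\log\tfrac1\varepsilon)$; such a net is a subset of $P$ meeting every rectangle, i.e.\ a hitting set, so
$$\mhs(\RR)=O\bigl(\LP\log\log\LP\bigr).$$
For the second half, the LP-rounding algorithm of Chalermsook and Chuzhoy~\cite{Chalermsook09} turns a fractional independent set of value $\LP$ into an integral one of size $\Omega(\LP/\log\log\LP)$ — their $O(\log\log n)$ approximation factor is in fact $O(\log\log\LP)$, the $\log\log$ being of the LP value being rounded — so $\LP=O(\mis(\RR)\log\log\LP)$.

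Multiplying the two estimates,
$$\frac{\mhs(\RR)}{\mis(\RR)}=\frac{\mhs(\RR)}{\LP}\cdot\frac{\LP}{\mis(\RR)}=O\bigl((\log\log\LP)^2\bigr),$$
and since $\LP\le\mhs(\RR)\le\mis(\RR)^2$ by the trivial bound recalled above, $\log\log\LP\le\log\log\bigl(\mis(\RR)^2\bigr)=O(\log\log\mis(\RR))$ once $\mis(\RR)$ exceeds a fixed constant; below that constant the ratio $\mhs(\RR)/\mis(\RR)\le\mis(\RR)$ is itself bounded by a constant (indeed $\mis(\RR)=1$ forces $\mhs(\RR)=1$, by Helly's theorem applied to the pairwise-intersecting $x$- and $y$-projections). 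This yields $\mhs(\RR)/\mis(\RR)=O((\log\log\mis(\RR))^2)$, as claimed. The proof is little more than bookkeeping: the only points that need care are reading off the two cited bounds with the $\log\log$ taken of the LP optimum rather than of $n$, and the routine passage from the $\varepsilon$-net bound of Aronov et al.\ to an actual hitting set through the normalised fractional solution. Neither is a genuine obstacle, which is presumably why this otherwise-new bound admits such a short proof.
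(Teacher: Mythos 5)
Your overall architecture is the same as the paper's: sandwich the LP value between $\mis$ and $\mhs$, charge one $\log\log$ factor to the $\varepsilon$-net bound of Aronov et al.~\cite{Aronov10} on the hitting-set side, charge the other to the LP-rounding of Chalermsook and Chuzhoy~\cite{Chalermsook09} on the independent-set side, and convert $\log\log \LP$ into $\log\log \mis$ via $\LP\leq\mhs(\RR)\leq\mis(\RR)^2$. The first half (from fractional hitting set to $\varepsilon$-net to $\mhs(\RR)=O(\LP\log\log\LP)$) is exactly what the paper does and is fine.

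The gap is in the second half. You assert that the Chalermsook--Chuzhoy rounding loses only a factor $O(\log\log\LP)$, ``the $\log\log$ being of the LP value being rounded.'' That is not the form in which their result can be cited: their integrality-gap bound is parameterized by the size of the ambient grid (equivalently by $n$), i.e.\ for rectangles with corners in $[t]^2$ one gets an independent set of size $\Omega(\LP_{[t]^2}(\RR)/\log\log t)$. Since $n$ (or $t$) can be arbitrarily large compared to $\LP(\RR)$ --- think of a huge clique, where $\LP=O(1)$ --- replacing $\log\log t$ by $\log\log\LP$ is a genuinely stronger statement, and it is precisely the point your proof skips. The paper closes this hole with Lemma~\ref{lem:lemfornewbound}: starting from $\alpha=\mis(\RR)$, it builds hitting sets $P_x,P_y$ of the coordinate projections with $\max\{|P_x|,|P_y|\}\leq\alpha$, rescales so that $P_x\times P_y$ sits inside $[\alpha]^2$, trims and then \emph{grows} each rectangle to have corners on that grid, and shows via a dual-solution shifting argument (each grid point's dual weight is spread to its at most $8$ neighbors) that the LP value changes by at most a factor $9$. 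Only then can the grid-parameterized Chalermsook--Chuzhoy bound be invoked with $t=\alpha$, giving $\LP(\RR)\leq 9\LP_{[\alpha]^2}(\RR')\leq O(\alpha\log\log\alpha)$. Without this gridification step (or some substitute argument establishing that the rounding loss depends only on $\LP$), your proof is incomplete; everything else --- the normalization of the fractional hitting set, the $\mhs\leq\mis^2$ bookkeeping, and the Helly-type treatment of the constant-size regime --- is correct.
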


In order to prove this bound, we need to briefly recall the natural linear programming formulations for \MIS and \MHS. A point $p$ in the plane is
called a \emph{witness} of a maximal clique of the intersection graph of $\RR$ if it hits this clique. Given any (possibly infinite) hitting set $H
\subseteq \Real^2$ of containing the witness points of all maximal cliques in $\RR$, define the following polytopes:
\begin{align*}
\text{Pol}_H(\RR) &= \big\{x \in \Real^\RR\colon \sum_{r\in \RR:\ p \in r}x_r \leq 1 \text{ for all $p \in H$}, x \geq 0\big\},\\
\text{Dual}_H(\RR) &= \big\{y \in \Real^H\colon \sum_{p \in H\cap r}y_p \geq 1 \text{ for all $r \in \RR$}, y \geq 0\big\}.
 \end{align*}
and the following dual linear programs:
\begin{align*}
  \LP_H(\RR) &= \max\bigg\{ \sum_{r \in \RR}x_r\colon x \in \text{Pol}_H(\RR)\bigg \},\\
  \LP'_H(\RR) &= \min\bigg\{ \sum_{p \in H}y_p\colon y \in \text{Dual}_H(\RR)\bigg \}.
\end{align*}
It is easy to see that the previous linear programs are relaxations of \MIS and \MHS respectively, therefore
\begin{equation}\label{eqn:dual}
 \mis(\RR) \leq \LP_H(\RR) = \LP'_H(\RR) \leq \mhs(\RR);
 \end{equation}
furthermore the value $\LP_H(\RR)$ does not depend on the hitting set $H$ chosen, and so, we can drop the subindex $H$ in \eqref{eqn:dual}.

Chalermsook and Chuzhoy~\cite{Chalermsook09} showed that for any family of rectangles $\RR$ having their corners in the grid $[t]^2=[t]\times[t]$, it
is possible to find an independent set $\mathcal{Q}$ with
\begin{equation}\label{eq:loglog-MIS}
\mis(\RR) \leq \LP_{[t]^2}(\RR) \leq |\mathcal{Q}|\cdot O(\log \log (t) ),
\end{equation}

On the other hand, Aronov et al.~\cite{Aronov10} have shown the existence of $O(\frac{1}{\varepsilon}\log\log\frac{1}{\varepsilon})$-nets for families of axis-parallel rectangles, concluding that for every family $\RR$ of rectangles, there exist a polynomial time computable hitting set $P$, with
\begin{equation}\label{eq:loglog-MHS}
|P| \leq O(\LP(\RR) \log \log (\LP(\RR))).
\end{equation}

To prove Theorem~\ref{teo:loglog}, we require the following lemma.

\begin{Lem}\label{lem:lemfornewbound}
For every family of rectangles $\RR$, with $\alpha := \mis(\RR)$, there is another family $\RR'$ of rectangles with corners in the grid $[\alpha]^2$
such that
\begin{align}\label{eq:lemma-gridify}
  \mis(\RR') \leq \mis(\RR) \leq \LP(\RR)  \leq 9\LP(\RR').
\end{align}
\end{Lem}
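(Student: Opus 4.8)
The plan is to \emph{compress} the coordinates of $\RR$ onto a grid of side $O(\alpha)$ in such a way that (i) no two currently intersecting rectangles are ever separated, so that the independence number can only go down, and (ii) the packing LP loses at most a factor $9$. I would first build the grid. Since two rectangles with disjoint $x$-projections are disjoint, the $x$-projections of the members of $\RR$ form a family of intervals of independence number at most $\alpha$; interval graphs being perfect, this family has a clique cover of size at most $\alpha$, and by Helly's theorem each of these cliques of intervals has a common point. This produces points $q_1<\dots<q_k$ with $k\le\alpha$ that \emph{pierce} all the $x$-projections (every $x$-projection contains some $q_j$). Adding two guard lines $q_0:=\min_{r}\ell^r_x$ and $q_{k+1}:=\max_{r}u^r_x$ guarantees that every vertical edge of every rectangle lies between two consecutive lines. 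Doing the same on the $y$-axis yields lines $s_0<\dots<s_{l+1}$ with $l\le\alpha$. The resulting grid has at most $\alpha+2$ lines per axis; since the only application of this lemma feeds this number into a $\log\log$, it is as good as $[\alpha]^2$, and I shall keep writing $[\alpha]^2$.

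Next I would define $\RR'$ by replacing each $r$ with its \emph{grid-closure} $r'$: the left side of $r'$ is the rightmost vertical grid line at or to the left of $\ell^r_x$, the right side is the leftmost vertical grid line at or to the right of $u^r_x$, and the horizontal sides are defined symmetrically from $\ell^r_y$ and $u^r_y$; equivalently, $r'$ is the smallest box with sides on grid lines containing $r$. Since $r\subseteq r'$, two disjoint rectangles of $\RR'$ always come from two disjoint rectangles of $\RR$, hence every independent set of $\RR'$ maps to one of $\RR$ and $\mis(\RR')\le\mis(\RR)$. Combined with $\mis(\RR)\le\LP(\RR)$ from \eqref{eqn:dual}, two of the three inequalities of \eqref{eq:lemma-gridify} are already in place.

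The heart of the proof is $\LP(\RR)\le 9\,\LP(\RR')$. I would evaluate $\LP(\RR')$ using as hitting set $H$ the set of all grid vertices $(q_a,s_b)$; this is allowed because the intersection of any clique of $\RR'$ is a non-empty (by Helly's theorem) grid-aligned box, hence contains a grid vertex, and the value of the LP does not depend on the chosen $H$. Let $x^{*}$ be an optimal solution of $\LP(\RR)$ and set $y_{r'}:=x^{*}_r/9$. To check feasibility, fix a grid vertex $p=(q_a,s_b)$ and take any $r$ with $p\in r'$. Unwinding the definition of the grid-closure gives $\ell^r_x<q_{a+1}$ and $u^r_x>q_{a-1}$, so the $x$-projection of $r$ meets the open interval $(q_{a-1},q_{a+1})$; but this projection also contains one of the piercing points $q_m$, and an interval that contains some $q_m$ and meets $(q_{a-1},q_{a+1})$ must contain $q_{a-1}$, $q_a$ or $q_{a+1}$. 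Symmetrically, the $y$-projection of $r$ contains one of $s_{b-1}$, $s_b$, $s_{b+1}$. Hence $r$ contains one of the at most nine grid vertices $(q_i,s_j)$ with $|i-a|\le1$ and $|j-b|\le1$, and summing the corresponding at most nine constraints $\sum_{r:\,(q_i,s_j)\in r}x^{*}_r\le1$ gives $\sum_{r:\,p\in r'}x^{*}_r\le 9$, that is, $\sum_{r':\,p\in r'}y_{r'}\le1$. Therefore $\LP(\RR')\ge\sum_{r'}y_{r'}=\tfrac19\LP(\RR)$, which closes the chain.

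The step I expect to be the genuine obstacle is this last one, and the idea that makes it go through is to take the grid lines to be a \emph{piercing set} of the projection intervals: that is exactly what upgrades the weak fact ``$r$ nearly reaches the line $x=q_a$'' into ``$r$ actually contains $q_{a-1}$, $q_a$ or $q_{a+1}$'', and so bounds by $9$ -- rather than by the useless bound $\alpha$ -- the amount of $x^{*}$-mass that a single grid vertex can inherit once the rectangles are enlarged. The boundary cases, in which $a$ or $b$ is $0$ or is the last index and the indices $a\pm1$, $b\pm1$ or the guard lines $q_0,q_{k+1},s_0,s_{l+1}$ need minor care, are treated in the same way, using those guard lines only as one-sided bounds.
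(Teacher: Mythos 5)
Your proof is correct and follows essentially the same route as the paper: the grid is built from minimum piercing sets of the two projected interval families, $\RR'$ consists of the grid closures, and the factor $9$ comes from the observation that a grid vertex lying in a closure $r'$ forces the original $r$ (which, crucially, is itself pierced by the grid) to contain one of the at most nine surrounding grid vertices --- the paper merely phrases this step dually, spreading an optimal dual solution of $\LP(\RR')$ over the eight neighbors of each grid point rather than scaling an optimal primal solution of $\LP(\RR)$ by $1/9$. The only substantive deviation is that you add guard lines instead of first trimming the rectangles to the bounding box of the piercing grid as the paper does, so your grid has $\alpha+2$ lines per axis and does not literally satisfy the ``corners in $[\alpha]^2$'' clause of the statement, though as you note this is immaterial for the $\log\log$ application.
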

\begin{proof} Let $\RR_x$ (resp. $\RR_y$) be the family of intervals obtained by projecting $\RR$ on the $x$-axis (resp. $y$-axis), and let $P_x$ (resp.~$P_y$) be minimum hitting sets for $\RR_x $ (resp.~$\RR_y$). Similar to the proof of the upper bounds of Theorem~\ref{teo:dual}, we have
\begin{align*}
\max\{|P_x|,|P_y|\} = \max\{\mis(\RR_x),\mis(\RR_y)\} \leq \mis(\RR) = \alpha.
\end{align*}

Consider the grid $P_x \times P_y$ of size at most $\alpha \times \alpha$. By translating and piece-wise scaling the plane, we can identify $P_x$ with
the set $\{(i,0)\colon 1\leq i \leq |P_x|\}$ and $P_y$ with the set $\{(0,j)\colon 1 \leq j \leq |P_y|\}$ without changing the intersection graph
associated with $\RR$. Thus, we can identify the grid $P_x \times P_y$ with a subgrid of $[\alpha]\times[\alpha]$. Note that this grid is itself, a
hitting set of $\RR$.

Furthermore, consider the family $\widetilde{\RR} =\{R \cap [1,\alpha] \times [1,\alpha]\colon R \in \RR\} $. This is, $\widetilde\RR$ is obtained by trimming the rectangles to the rectangular region $[1,\alpha]\times[1,\alpha]$. It is easy to see that this operation does not change the intersection graph of the family either. So, for our purposes, we will assume w.l.o.g. that $\RR = \widetilde\RR$.

Let $\RR'$ be the family of rectangles obtained by replacing each rectangle $r$ of $\RR$ by the minimal possible rectangle in the plane containing $r$
and having all its corners in the grid $[\alpha]\times[\alpha]$. This is, we replace the rectangle $r$ defined by $\ell^r$ and $u^r$ by the rectangle
defined by $\tilde{\ell}^r=(\lfloor \ell^r_x \rfloor, \lfloor \ell^r_y \rfloor)$ and $\tilde{u}^r=(\lceil u^r_x \rceil, \lceil u^r_y \rceil)$, where
$\lfloor\cdot\rfloor$ and $\lceil\cdot\rceil$ are the floor and ceiling functions, respectively.

The first inequality of \eqref{eq:lemma-gridify} follows since any independent set of $\RR'$ induces an independent set of $\RR$ of the same size. The second inequality follows from \eqref{eqn:dual}. The only non-trivial inequality is the last one.

Since $[\alpha]^2$ is a hitting set for $\RR$ and $\RR'$, $\LP_{[\alpha]^2}(\RR)=\LP(\RR)$ and $\LP_{[\alpha]^2}(\RR')=\LP(\RR)$.
Consider a fractional optimal solution $y'$ for $\LP'_{[\alpha]^2}(\RR')$ and recall that the support of $y'$ is contained in $[\alpha]^2$. Observe that if $p$ is a point in the support of~$y$ that fractionally hits some grown rectangle $r_+$, then either $p$, one of its 4 immediate neighbors in the grid or one of its 4 diagonal neighbors in the grid will hit the original rectangle $r$. Define $y$ as
\begin{equation}
y_q = y'_q + \sum_{\substack{p \in [\alpha]^2\colon p \text{ immediate or }\\ \text{diagonal neighbor of } q}} y'_p, \text{ for all $q\in [\alpha]^2$.}
\end{equation}

By the previous observation, $y$ is a fractional feasible solution for the dual of $\LP_{[\alpha]^2}(\RR)$, and by definition, its value is at most $9$ times the value of $y'$.\qedhere
\end{proof}

Now we are ready to prove Theorem~\ref{teo:loglog}.
\begin{proof}[Proof of Theorem~\ref{teo:loglog}]
Let $\RR$ be a family of rectangles with $\mis(\RR)=\alpha$ and $\RR'$ the family guaranteed by Lemma~\ref{lem:lemfornewbound}
Then, by combining \eqref{eq:loglog-MHS} and \eqref{eq:loglog-MIS}, we have:
\begin{align*}
  \mhs(\RR) &\leq O(\LP(\RR)\log\log(\LP(\RR))) \leq O(\LP(\RR')\log\log(\LP(\RR')))\\
              &= O(\LP_{[\alpha]^2}(\RR')\log\log(\LP_{[\alpha]^2}(\RR')))\\
              &\leq O(\alpha \log\log(\alpha) \log\log(\alpha \log\log (\alpha))) = O(\alpha (\log\log(\alpha))^2).\qedhere
\end{align*}
\end{proof}

\section{Graph classes inclussions}
\label{graph-classes}
\noindent \textbf{Lemma 1.}
Let $\mathcal{G}_{\text{int}}=\{\II(\RR)\colon \RR \text{ is diagonal-intersecting}\}$ be the class of intersection graphs arising from
diagonal-intersecting families of rectangles. Let also $\mathcal{G}_{\text{low-int}}$, $\mathcal{G}_{\text{split}}$,
$\mathcal{G}_{\text{c-sep}}$ and $\mathcal{G}_{\text{touch}}$ be the classes arising from diagonal-lower-intersecting,
diagonal-splitting, diagonal-corner-separated, and diagonal-touching families of rectangles, respectively. Then
$$\mathcal{G}_{\text{touch}} \subsetneq \mathcal{G}_{\text{low-int}} = \mathcal{G}_{\text{split}} = \mathcal{G}_{\text{c-sep}} \subsetneq
\mathcal{G}_{\text{int}}.$$

Before proving Lemma 1, we give a simple characterization of diagonal touching graphs that we call {\em crossing condition}, which was independently found by Hixon~\cite{Hixon13} and Soto and Thraves~\cite{MSoto13}.
\begin{Prop}\label{characterization_DG}
The diagonal touching graphs are the graphs such that there exists an order $<$ on the vertices, such that: for all $a,b,c,d \in V$ such that  $a<b<c<d$, if both $(a,c)\in E$ and $(b,d)\in E$ then $(b,c)\in E$.
\end{Prop}

\begin{proof}
Given a set of diagonal touching rectangles, we consider the ordering of the rectangles along the diagonal. Assume that the condition does not hold for four vertices $a<b<c<d$. As $(b,c)\notin E$, by symmetry we can assume that $u_x^b < \ell_x^c $. But $(a,c)\in E$ leads to $\ell_x^c \leq u_x^a  $, so $u_x^b < u_x^a$, contradicting $a<b$.

Consider now a graph with the property. We describe how to construct the rectangles in a way that their upper-right corner touches the diagonal. First put the top-right corners on the diagonal, in the ordering. Each rectangle will be just large enough to touch its furthest neighbors, i.e., for a rectangle $i$, if $k$ is the smallest neighbors (in the ordering) and $l$ the biggest, we choose $\ell^i_x = u^k_x$ and $\ell^i_y = u^l_y$. All the intersections in the graphs occur in this rectangle representation. Assume that there is an intersection between two rectangles $i<j$ and $(i,j)\notin E$. If the rectangle $i$ goes down enough to touch $j$ it means that there exists $k$, with $j<k$ and $(i,k)\in E$. By symmetry, there also exists $h$, with $h<i$ and $(h,j)\in E$. Then the crossing condition does not hold.
\end{proof}

\begin{Claim}
$\mathcal{G}_{\text{touch}} \subsetneq \mathcal{G}_{\text{low-int}}$
\end{Claim}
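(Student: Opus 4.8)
The inclusion $\mathcal{G}_{\text{touch}} \subseteq \mathcal{G}_{\text{low-int}}$ should be essentially immediate from the definitions: if $\RR$ is diagonal-touching with all upper-right corners on $D$, then each rectangle lies entirely in the bottom halfplane of $D$ (with only the corner on $D$), so any two intersecting rectangles automatically intersect below the diagonal. Hence $\RR$ is diagonal-lower-intersecting, and $\II(\RR)\in\mathcal{G}_{\text{low-int}}$. I would state this in one or two sentences, possibly recalling the normalization (after rotation) that puts the touching corner in the upper-right position.

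The substance of the claim is the strictness. The plan is to exhibit a graph $G \in \mathcal{G}_{\text{low-int}} \setminus \mathcal{G}_{\text{touch}}$. The natural candidate is a graph that violates the crossing condition of Proposition~\ref{characterization_DG} for \emph{every} vertex ordering, yet is realizable by a diagonal-lower-intersecting family. A clean choice is to take a graph where the crossing pattern is forced regardless of order — for instance a suitable small graph such as $K_{2,3}$, or a $C_4$ together with extra structure, or more robustly a graph built so that on any linear order one finds $a<b<c<d$ with $ac, bd\in E$ but $bc\notin E$. I would first verify that the chosen $G$ fails the crossing condition under all orderings (a finite case check, exploiting symmetry of $G$), which by Proposition~\ref{characterization_DG} shows $G\notin\mathcal{G}_{\text{touch}}$.

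Then I would exhibit an explicit diagonal-lower-intersecting realization of $G$: place the rectangles so that each one straddles $D$ (touching or crossing it), arrange the $a$- and $b$-points along $D$ in a chosen order, and size the rectangles so that the required adjacencies hold and all of them occur strictly below the diagonal, while non-adjacent pairs are separated. Concretely, for a small $G$ this is a direct coordinate construction: pick the corners, check pairwise intersections, and confirm every intersection point can be taken below $D$ by shrinking the above-diagonal part of each rectangle to a triangle. The main obstacle is choosing $G$ so that the ``not diagonal-touching'' part is robust to all orderings while the ``is diagonal-lower-intersecting'' realization is still easy to write down explicitly; once a good $G$ is fixed (and $K_{2,3}$ is a strong candidate, since it has no induced path structure compatible with the crossing condition), both halves are short finite verifications.
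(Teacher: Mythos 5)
Your first half is fine: after the standard normalization, all upper-right corners lie on $D$, so every rectangle is contained in the bottom halfplane of $D$ and every intersection automatically occurs below the diagonal; hence $\mathcal{G}_{\text{touch}} \subseteq \mathcal{G}_{\text{low-int}}$. This matches the paper.

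The strictness argument, however, has a genuine gap: you have not actually produced a separating graph, and your main concrete candidate, $K_{2,3}$, does not work. $K_{2,3}$ \emph{does} satisfy the crossing condition of Proposition~\ref{characterization_DG}: order the two vertices of the small side first, then the three vertices of the large side. For any $a<b<c<d$ with $(a,c)\in E$ and $(b,d)\in E$, both edges must go between the two sides, which forces $b$ into the first part and $c$ into the second, hence $(b,c)\in E$. (The same ordering shows every complete bipartite graph is diagonal-touching.) So $K_{2,3}\in\mathcal{G}_{\text{touch}}$ and cannot separate the classes. Your remaining suggestions (``$C_4$ with extra structure'', ``a graph where the crossing pattern is forced in every order'') only restate what is needed; finding such a graph and verifying that it simultaneously fails the crossing condition under \emph{all} orderings and still admits a diagonal-lower-intersecting realization is precisely the substance of the claim, and it is not a short finite check. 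For comparison, the paper's witness is the doubled $6$-cycle $G_{2C_6}$ with a universal vertex added: adding a universal vertex preserves membership in $\mathcal{G}_{\text{low-int}}$ (one can always insert a rectangle dominated below the diagonal by all others), whereas in any diagonal-touching representation a universal vertex would split the remaining rectangles into two interval graphs, and a case analysis on the length of a maximal alternating chain (using that interval graphs contain no induced $C_4$, together with the crossing condition) shows $G_{2C_6}$ admits no such $2$-colouring. You would need to supply a witness of comparable strength, together with both verifications, to complete the proof.
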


\begin{proof}
The inclusion of the two classes is a consequence of the geometric definitions of the classes. To prove that $\mathcal{G}_{\text{touch}}$ and
$\mathcal{G}_{\text{low-int}}$ are different we exhibit a specific graph in $\mathcal{G}_{\text{low-int}}\setminus \mathcal{G}_{\text{touch}}$. Before
doing that note that $\mathcal{G}_{\text{low-int}}$ is closed under adding a universal vertex, i.e., if $G=(V,E)$ is in
$\mathcal{G}_{\text{low-int}}$, then $\hat{G}=(V\cup \{u\},E \cup \{(v,u)|v\in V\})$ is also in the class (however this is not true for graphs in
$\mathcal{G}_{\text{touch}}$). Indeed, in $\mathcal{G}_{\text{low-int}}$ one can always create a rectangle $R$ with $a^R <a^1$, $b^R > b^n$,
so that $R$ is dominated by all other rectangles.
This new rectangle will be a universal vertex of the underlying graph.

\begin{figure}
\centering
\begin{tabular}{cc}
\includegraphics[scale=0.3]{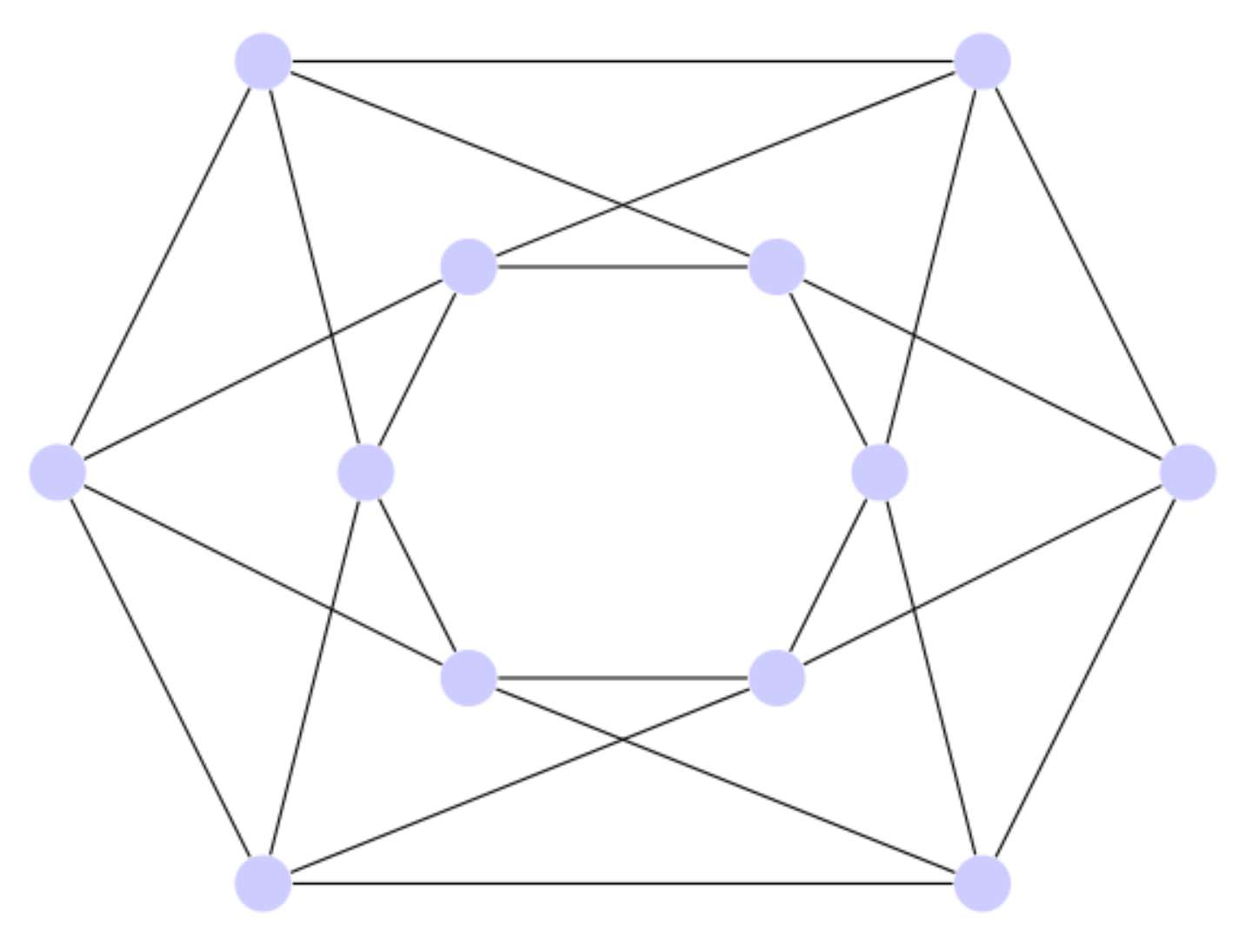}
&
\includegraphics[scale=0.6]{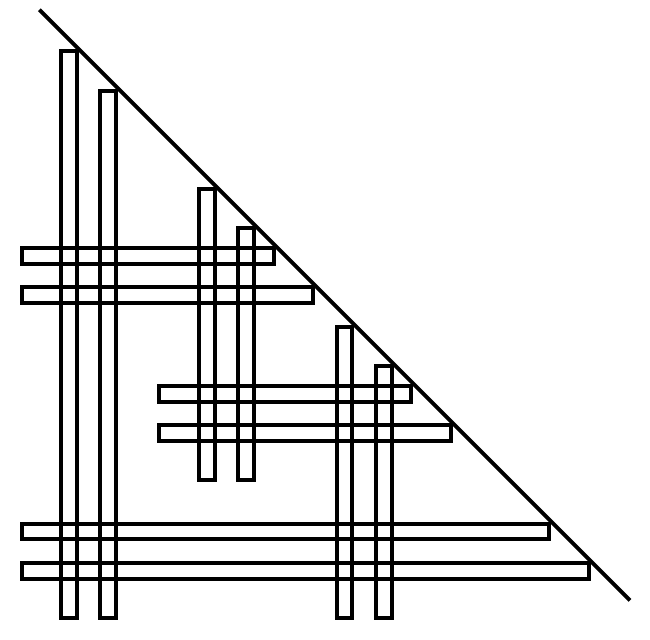}
\end{tabular}
\caption{\label{fig:double_six}$G_{2C_6}$, the doubled 6-cycle and its diagonal touching representation.}
\end{figure}

Thus consider the graph $G_{2C_6}$ of Figure \ref{fig:double_six}. This graph is clearly in $\mathcal{G}_{\text{touch}}$ (see the rectangle representation on the right of the figure). We show that if a universal vertex is added to $G_{2C_6}$, the graph is not in $\mathcal{G}_{\text{touch}}$ anymore, while by the previous observation it certainly belongs to $\mathcal{G}_{\text{low-int}}$.

First remark that in diagonal touching position, the rectangle that corresponds to a universal vertex ($i$ in the ordering), define a partition of the vertices $\{r|r<i\} \cup \{r|r>i\}$ inducing two interval graphs. Indeed as all the rectangles of $\{r|r<i\}$ touch the upper side of $i$ they form an interval graph; similarly for $\{r|r<i\}$ with the right side. The property is illustrated by Figure \ref{fig:universal_vertex}.

\begin{figure}
\centering
\includegraphics[scale=0.9]{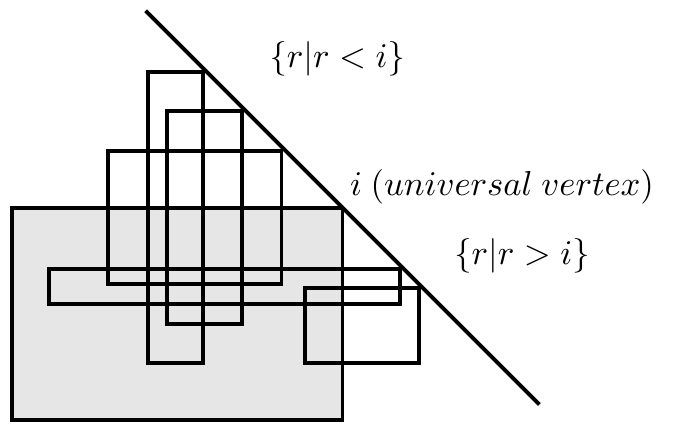}
\caption{\label{fig:universal_vertex} Universal vertex and partition of the other rectangles.}
\end{figure}

For convenience we redraw $G_{2C_6}$ as in Fig. \ref{fig:tube} (keeping in mind that there are edges linking the two ends). Note that two vertices of the same column are topologically equivalent as they share exactly the same neighbors. In the diagonal representation, they corresponds to the {\em twin rectangles}.
\begin{figure}
\centering
\includegraphics[scale=0.3]{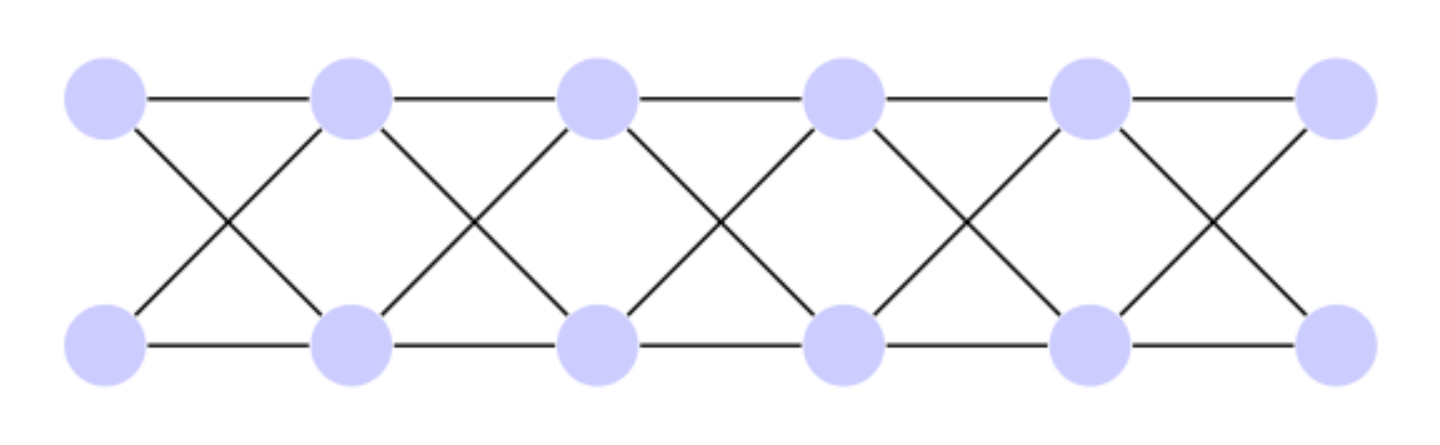}
\caption{\label{fig:tube} Tube representation.}
\end{figure}
Assume that a universal vertex can be added to $G_{2C_6}$ while staying in $\mathcal{G}_{\text{touch}}$. Then the rectangles can be partitioned into
two interval graphs. We say that a vertex is black (resp. white), if it is in the first (resp. second) interval graph. An alternating chain is a path
in the graph $G_{2C_6}$ such that two neighbors in the path have different colors and no two twins can be in the path. We consider the length $L$ of
a maximum alternating chain in the graph, and for each $L$ ($1 \leq L \leq 6 $) we show a contradiction. For this recall that an interval graph
cannot have a 4-cycle.

\begin{itemize}
\item[$L = 1$.] In this case there is only one color, say white, thus there is a white 4-cycle.
\item[$L = 2$.] In this case we proceed in three steps, see Figure \ref{fig:L=2}.
Take a maximum alternating chain (first step), as it is maximum the colors of the column on the left and on the right are determined (second step in the figure), then there is only one possibility to avoid the 4-cycles (third step). Implying in any case that there is an alternating chain of length 4, which is a contradiction.

\begin{figure}
\centering
\begin{tabular}{ccc}
\includegraphics[scale=0.8]{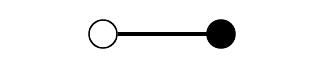}
&
\includegraphics[scale=0.8]{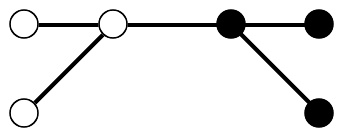}
&
\includegraphics[scale=0.8]{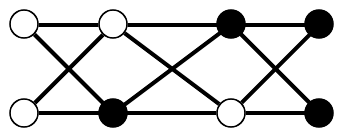}
\end{tabular}
\caption{\label{fig:L=2}The three steps for L=2}
\end{figure}

\item[$L=3$.] The argument is analogous to that of $L=2$.

\item[$L = 4$.] In the case the coloring is uniquely determined (up to the obvious color switching), and it is illustrated in  Figure \ref{fig:L=4_3_steps}.
\begin{figure}
\begin{center}
\begin{tabular}{ccc}
\includegraphics[scale=0.8]{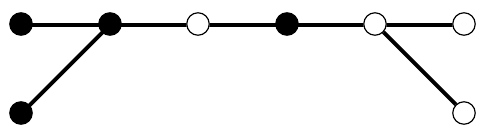}
&
\includegraphics[scale=0.8]{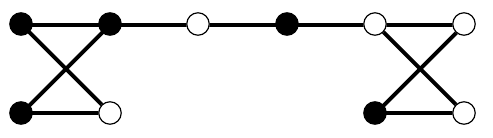}
&
\includegraphics[scale=0.8]{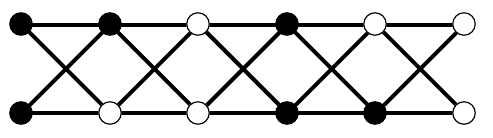}
\end{tabular}
\end{center}
\caption{\label{fig:L=4_3_steps}The three steps for L=4.}
\vspace{5ex}
\centering
\includegraphics[scale=0.8]{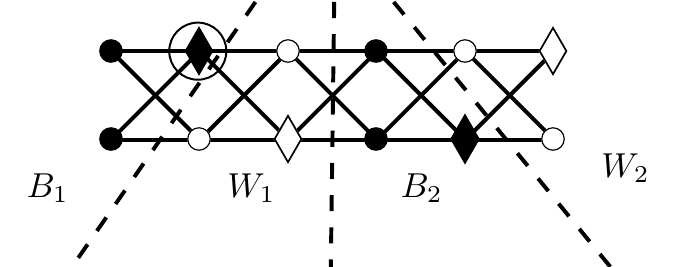}
\caption{\label{fig:L=4_groups}The case L=4, with the groups.}
\end{figure}

Observe that we have two independent groups of black vertices and two independent groups of white vertices (of course the colors are interchangeable), and in the ordering induced by the diagonal all vertices of one color are followed by all vertices of the other. Then in the ordering there is first the three vertices of the first black group $B_1$, then the other black group $B_2$, then a white group $W_1$ and then the other white group $W_2$. Note that by symetry the situation is fully equivalent to exchanging the role of $B_1$ and $B_2$, and/or that of $W_1$ and $W_2$ in the ordering. If follows that there exists $i\in B_1$,  $j \in B_2$, $k \in W_1,$ $l \in W_2$, such that $(i,k)\in E,$ $(j,l)\in E$ and $(k,l)\notin E$. (See for example the diamonds on Figure \ref{fig:L=4_groups}.) Then the graph is not diagonal because the crossing condition is violated, and therefore we obtain a contradiction.
\item[$L= 5$.] If there is an alternating chain of length 5 then there are two vertices of the same color at the end, and we have a 4-cycle like in Figure \ref{fig:L=5}, obtaining a contradiction.
\begin{figure}[!h]
\centering
\includegraphics[scale=0.8]{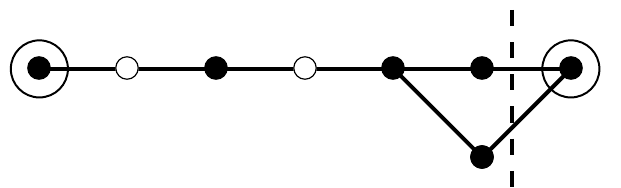}
\caption{\label{fig:L=5} The case L=5.}
\end{figure}
\item[$L = 6$.] We consider the induced subgraph with just the maximum chain, which has to be a cycle. This graph must be a diagonal-touching graph with an ordering having first three black vertices and then three whites ones. By inspection one can easily check that this is not possible.
\end{itemize}
Then $G_{2C_6}$ does not accept a universal vertex.
\end{proof}

\begin{Claim}
	$\mathcal{G}_{\text{c-sep}} =  \mathcal{G}_{\text{split}} = \mathcal{G}_{\text{low-int}} $.
\end{Claim}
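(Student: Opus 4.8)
The plan is to prove the three-way equality $\mathcal{G}_{\text{c-sep}} = \mathcal{G}_{\text{split}} = \mathcal{G}_{\text{low-int}}$ by a cycle of inclusions, exploiting that two of these inclusions are essentially immediate from the definitions and from a remark already made in the text. First, recall from the discussion following Definition~\ref{def-classes} that a diagonal-splitting family can be turned into a diagonal-corner-separated family with the same intersection graph, by replacing each rectangle $r$ with the minimal one containing the portion of $r$ below the diagonal; this gives $\mathcal{G}_{\text{split}} \subseteq \mathcal{G}_{\text{c-sep}}$. Conversely, every diagonal-corner-separated family is in particular diagonal-splitting (after the normalizing rotation, the three corners below $D$ include both lower corners, so $D$ meets every rectangle on, say, its upper side), giving $\mathcal{G}_{\text{c-sep}} \subseteq \mathcal{G}_{\text{split}}$. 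So $\mathcal{G}_{\text{split}} = \mathcal{G}_{\text{c-sep}}$ is the easy part.

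Next I would handle $\mathcal{G}_{\text{c-sep}} \subseteq \mathcal{G}_{\text{low-int}}$: a diagonal-corner-separated family has all rectangles with three corners (hence both lower corners) strictly below $D$, so any two intersecting rectangles must already intersect in the bottom halfplane — indeed their intersection is a rectangle whose lower-left corner is below $D$. Thus the family is diagonal-lower-intersecting (possibly after checking each $r$ genuinely meets $D$, which holds since it is diagonal-intersecting by hypothesis). This is the direction I expect to be quickest.

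The substantive direction is $\mathcal{G}_{\text{low-int}} \subseteq \mathcal{G}_{\text{split}}$ (equivalently $\subseteq \mathcal{G}_{\text{c-sep}}$): given an arbitrary diagonal-lower-intersecting family $\RR$, I must produce a diagonal-splitting family with the same intersection graph. The natural move, mirroring the trick in the proof of the upper bounds of Theorem~\ref{teo:dual}, is to replace each $r \in \RR$ by the minimal axis-parallel rectangle $r^-$ containing the region $r \cap \{y \le -x\}$. By the defining property of diagonal-lower-intersecting families, $r \cap r' \neq \emptyset$ iff $r$ and $r'$ share a point below $D$, iff $r^- \cap r'^{-} \neq \emptyset$; so the intersection graph is unchanged. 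What remains is to argue the new family is diagonal-splitting — that $D$ meets every $r^-$ on a common side. The region of $r$ below the diagonal is bounded above partly by segments of $r$'s boundary and partly by $D$ itself; its minimal enclosing box $r^-$ has its upper-right corner determined by the topmost and rightmost points of $r \cap \{y\le -x\}$, and a short case analysis on which edges of $r$ are cut by $D$ shows $D$ enters $r^-$ through its top edge and leaves through its left edge (or touches a corner). The main obstacle is precisely this geometric case check: one must verify that after the trimming operation the diagonal always crosses the boundary of $r^-$ on the (top, left) pair of sides, handling degenerate cases where $r$ lies almost entirely below $D$ or meets $D$ only at a corner. Once that is in hand, all four inclusions close the cycle and the equality follows.

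\begin{proof}
We show $\mathcal{G}_{\text{split}} \subseteq \mathcal{G}_{\text{c-sep}} \subseteq \mathcal{G}_{\text{low-int}} \subseteq \mathcal{G}_{\text{split}}$.

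\emph{$\mathcal{G}_{\text{split}} \subseteq \mathcal{G}_{\text{c-sep}}$.} As noted after Definition~\ref{def-classes}, in a diagonal-splitting family (normalized so that $D$ meets every rectangle on its upper side) two rectangles intersect iff they have a common point below $D$. Replacing each $r$ by the minimal rectangle containing $r \cap \{y\le -x\}$ yields a family with the same intersection graph in which three corners of each rectangle lie in the bottom halfplane; this is diagonal-corner-separated.

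\emph{$\mathcal{G}_{\text{c-sep}} \subseteq \mathcal{G}_{\text{low-int}}$.} Let $\RR$ be diagonal-corner-separated, normalized so that for each $r$ the upper-right corner is above $D$ and the other three are below. In particular both lower corners of $r$ lie below $D$, so if $r\cap r' \neq\emptyset$ then $r\cap r'$ is a nonempty axis-parallel rectangle whose lower-left corner, being dominated by a lower corner of $r$, lies below $D$. Hence $r$ and $r'$ share a point below $D$, and $\RR$ is diagonal-lower-intersecting.

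\emph{$\mathcal{G}_{\text{low-int}} \subseteq \mathcal{G}_{\text{split}}$.} Let $\RR$ be diagonal-lower-intersecting. For each $r\in\RR$, let $r^-$ be the minimal axis-parallel rectangle containing the region $r\cap\{y\le -x\}$ (a nonempty set since $r$ meets $D$). By the defining property of diagonal-lower-intersecting families, for $r\neq r'$ we have $r\cap r'\neq\emptyset$ iff $r$ and $r'$ share a point below $D$, which happens iff $r^-\cap r'^{-}$ contains a point below $D$, in particular iff $r^-\cap r'^{-}\neq\emptyset$ (both lower corners of each $r^-$ are below $D$, so any nonempty intersection already has a point below $D$). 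Thus $\RR^- = \{r^-: r\in\RR\}$ has the same intersection graph as $\RR$. It remains to see that $\RR^-$ is diagonal-splitting. The boundary of $r\cap\{y\le -x\}$ consists of portions of $\partial r$ together with a segment of $D$; a short case analysis on which sides of $r$ are crossed by $D$ shows that $D$ enters $r^-$ through its upper side and exits through its left side (degenerating to a single corner when $r$ meets $D$ only at a corner, or when $r\cap\{y\le-x\}$ shares an entire side with $r$). Hence $D$ meets every rectangle of $\RR^-$ on its upper side, so $\RR^-$ is diagonal-splitting.

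Combining the three inclusions gives $\mathcal{G}_{\text{c-sep}} = \mathcal{G}_{\text{split}} = \mathcal{G}_{\text{low-int}}$.
\end{proof}
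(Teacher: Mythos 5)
Your proof is correct and rests on the same key construction as the paper's own proof --- replacing each rectangle by the minimal box containing its portion below the diagonal --- merely traversing the cycle of inclusions in the opposite direction ($\mathcal{G}_{\text{split}} \subseteq \mathcal{G}_{\text{c-sep}} \subseteq \mathcal{G}_{\text{low-int}} \subseteq \mathcal{G}_{\text{split}}$ versus the paper's $\mathcal{G}_{\text{c-sep}} \subseteq \mathcal{G}_{\text{split}} \subseteq \mathcal{G}_{\text{low-int}} \subseteq \mathcal{G}_{\text{c-sep}}$). One cosmetic slip: in the $\mathcal{G}_{\text{c-sep}} \subseteq \mathcal{G}_{\text{low-int}}$ step, the lower-left corner of $r\cap r'$ need not be dominated by a lower corner of $r$ itself (when $\ell^{r'}_y>\ell^{r}_y$ it is instead dominated by the lower-right corner of $r'$), but the argument is symmetric in $r$ and $r'$, so this is immediately repaired.
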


\begin{proof}
First observe that $\mathcal{G}_{\text{c-sep}} \subseteq \mathcal{G}_{\text{split}}$, because to separate the top-right corners, the diagonal must intersect the upper sides of the rectangles. Also $\mathcal{G}_{\text{split}} \subseteq \mathcal{G}_{\text{low-int}}$, because if two rectangles $i<j$ in splitting position intersect above the diagonal, the point $a_j$ is also in the intersection, and this point is on the diagonal. Finally, if all the intersections of the rectangles are present below the diagonal, one can replace the top-right corner of a rectangle $i$ by $(b^i_x,a^i_y)$. This transformation does not change the intersection graph, as the parts of the rectangles below the diagonal do not change, and it does not create new intersections. The new rectangles are in corner-separated position. Then $\mathcal{G}_{\text{low-int}} \subseteq \mathcal{G}_{\text{c-sep}}$.
\end{proof}

\begin{Claim}
$\mathcal{G}_{\text{c-sep}} \subsetneq
\mathcal{G}_{\text{int}}$
\end{Claim}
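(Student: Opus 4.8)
The plan is to handle the two directions separately. The inclusion $\mathcal{G}_{\text{c-sep}}\subseteq\mathcal{G}_{\text{int}}$ is immediate from the definitions: in a diagonal-corner-separated family exactly one corner of each rectangle lies in one halfplane of $D$ while the other three lie in the complementary one, so every rectangle meets $D$ and the family is diagonal-intersecting. For the strict inclusion it suffices, by the previous claim $\mathcal{G}_{\text{c-sep}}=\mathcal{G}_{\text{low-int}}$, to exhibit a graph $G$ that is the intersection graph of some diagonal-intersecting family but of no diagonal-lower-intersecting one.

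The witness $G$ is designed to exploit the single feature that a diagonal-intersecting family has and a lower-intersecting one does not: a rectangle may cross $D$ through two \emph{opposite} sides, so that both halfplanes of $D$ contain a two-dimensional part of it, and hence it can be adjacent to some rectangles only above $D$ and to others only below $D$. By contrast, in a lower-intersecting family every edge is witnessed in the bottom halfplane; moreover, as in the proof of the upper bounds of Theorem~\ref{teo:dual}, such a family can be replaced by a corner-separated one with the same intersection graph, and for corner-separated families the bottom-left corner of every pairwise overlap already lies below $D$. Concretely, I would build $G$ around a ``band'' rectangle $B$ that crosses $D$ from its left edge to its right edge: inside the part of $B$ lying above $D$ place a small, rigid interval-graph gadget (each of its rectangles reaching down to touch $D$ so as to remain a legal member of the family), inside the part of $B$ lying below $D$ place a symmetric gadget, make the two gadgets mutually non-adjacent and each fully adjacent to $B$, and choose the gadgets (for instance from a long induced path, or a $C_4$-free nested/crossing pattern) so that their position relative to $D$ cannot be inverted. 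Writing down explicit coordinates then certifies $G\in\mathcal{G}_{\text{int}}$.

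To show $G\notin\mathcal{G}_{\text{low-int}}$ I would argue by contradiction. Given a lower-intersecting representation of $G$, normalize it to a corner-separated one, so that every rectangle $r$ has only its upper-right corner above $D$, and order the vertices by the abscissa $a^r_x=-u^r_y$ at which $D$ enters the top edge of $r$. Because every overlap then occurs in the lower-left rectangular parts, this ordering forces crossing-type constraints on which edges may coexist, analogous to the crossing condition (Property~\ref{characterization_DG}) but now expressing that the neighborhood of each vertex splits into two interval graphs, one ``above'' and one ``below''. A short case analysis on the position of the vertex of $B$ in this order then shows that the two gadgets of $G$ cannot both be realized on the required side, which is the desired contradiction.

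The main obstacle is twofold: first, isolating the smallest and cleanest gadget $G$ for which the argument goes through; and second, more seriously, pinning down the correct rigidity statement for $\mathcal{G}_{\text{low-int}}$. Unlike the diagonal-touching case, where Property~\ref{characterization_DG} provides a clean one-dimensional ordering characterization, corner-separated families retain a genuinely two-dimensional degree of freedom in the positions of the lower-left corners, so the forbidden pattern is more delicate and the case analysis that rules $G$ out of $\mathcal{G}_{\text{low-int}}$ is the technical heart of the proof.
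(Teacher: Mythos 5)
The easy inclusion $\mathcal{G}_{\text{c-sep}}\subseteq\mathcal{G}_{\text{int}}$ is fine, but your strictness argument is a plan rather than a proof, and the gap you yourself flag at the end is precisely the content of the claim. You never specify the witness graph $G$, never certify $G\in\mathcal{G}_{\text{int}}$ with coordinates, and --- most importantly --- the exclusion $G\notin\mathcal{G}_{\text{low-int}}$ rests on a ``rigidity statement'' that you explicitly acknowledge you have not pinned down. Moreover, the rigidity you gesture at is dubious as stated: the assertion that the neighborhood of a vertex splits into two interval graphs is a feature of \emph{diagonal-touching} representations (it is exactly what the universal-vertex argument for $\mathcal{G}_{\text{touch}}\subsetneq\mathcal{G}_{\text{low-int}}$ exploits); in a corner-separated representation the lower-left corners retain genuine two-dimensional freedom, as you note yourself, and the neighbors of a rectangle can overlap it below the diagonal in ways that do not decompose into two interval orders. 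So the deferred case analysis cannot be carried out from the premises you state, and the technical heart of the claim is missing.

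For comparison, the paper's route is concrete at exactly the points where yours is not. The witness is the cube graph $G_{\text{cube}}$ (Figure~\ref{fig:cube}), given explicitly with a diagonal-intersecting representation. For the exclusion, instead of a single global rigidity statement, the paper isolates four elementary local patterns (R1--R4) valid in \emph{any} lower-intersecting representation, phrased in terms of the order of the points $a^i$ along the diagonal and a notion of nesting ($a^u<a^v<b^v<b^u$); these are the correct substitutes for the crossing condition of Property~\ref{characterization_DG} in this wider class. A finite case analysis on where two particular vertices of the cube can sit in that order then produces a contradiction in every branch. To repair your proof you would need to either prove correct analogues of R1--R4 strong enough to rule out your band-plus-gadgets graph, or simply adopt a concrete candidate such as the cube and carry out the pattern analysis explicitly.
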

\begin{proof}
The inclusion of the two classes is again a consequence of the geometric definitions of the classes. We now prove that the cube, $G_{\text{cube}}$, depicted in Figure \ref{fig:cube}, is in $\mathcal{G}_{\text{int}}$, but not in $\mathcal{G}_{\text{low-int}}$. The first assertion follows directly from the figure on the right. In what follows we prove the latter.
\begin{figure}[!h]
\centering
\begin{tabular}{cc}
\includegraphics[scale=0.25]{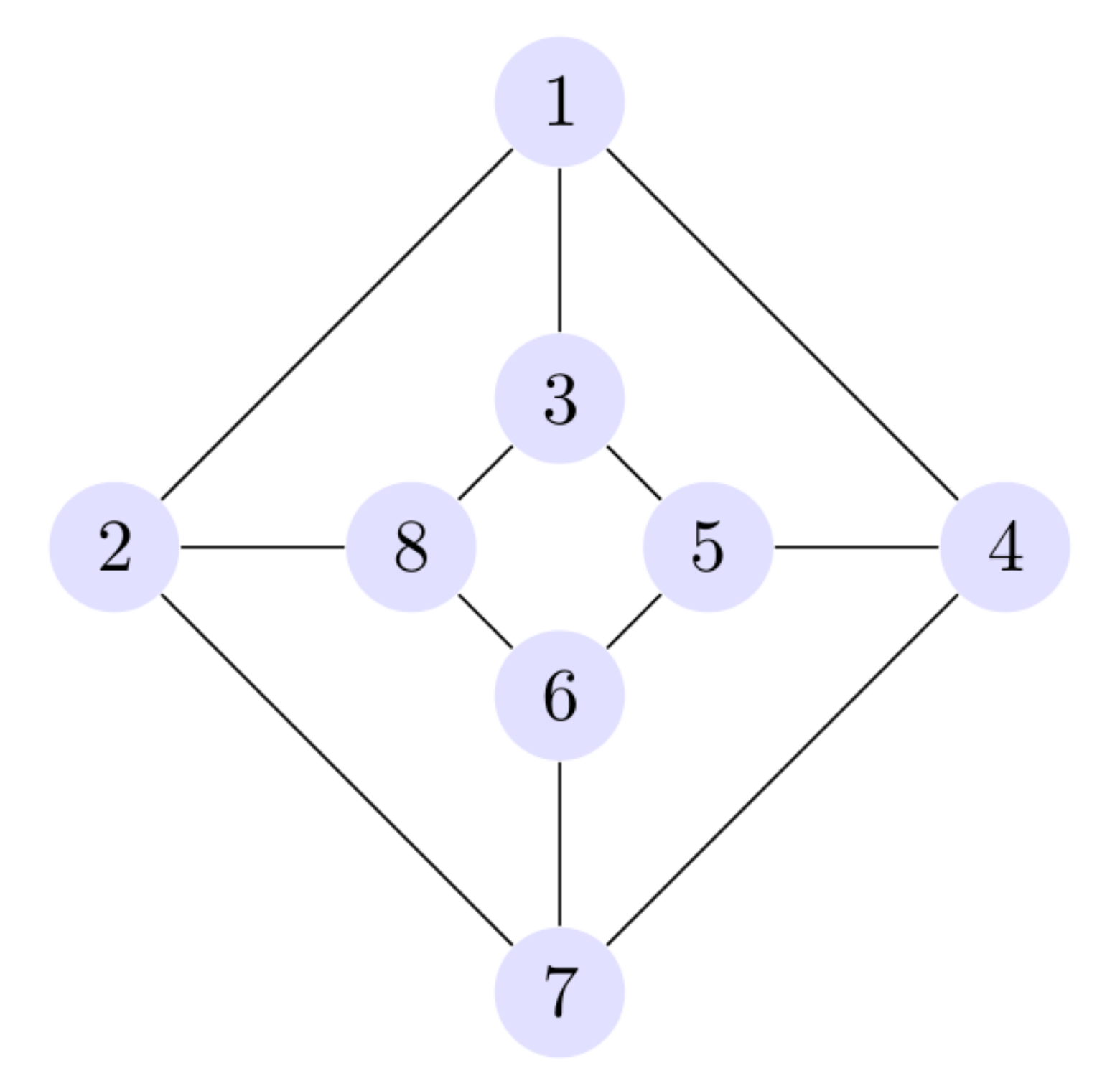}
&
\includegraphics[scale=0.8]{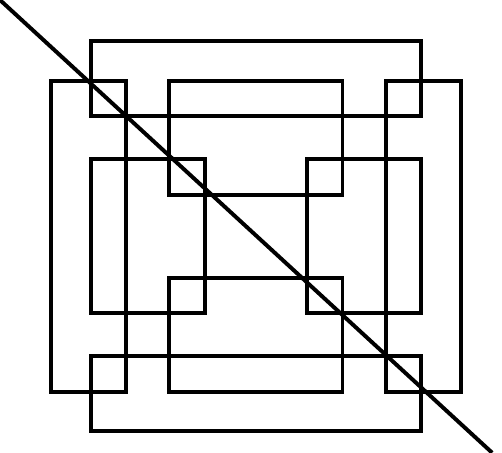}
\end{tabular}
\caption{ \label{fig:cube} The cube graph and diagonal intersecting representation.}
\end{figure}
Assume that there is a representation of $G_{\text{cube}}$ in lower-intersecting position and consider the ordering $<$ of the points $a_i$ along the diagonal. We say that a rectangle $v$ is nested in a rectangle $u$ if $a^u < a^v < b^v < b^u$. We now remark four basic properties about the graphs in $\mathcal{G}_{\text{low-int}}$ and their rectangle representations.

\begin{itemize}
\item[{\bf R1}] If $i<j<k$ and $i$ is nested in $k$, then $i$ and $j$ intersect.
\item[{\bf R2}] If there exists $i<j<k<l$ with $(i,k)\in E$, $(j,l)\in E$ and $(j,k)\notin E$ (i.e., the crossing pattern) then $j$ is  nested in $i$ or $l$ is nested in $k$. Indeed, if the crossing pattern is present there exists a path in the plane, below the diagonal, from $a_i$ to $a_k$ inside  rectangles $i$ and $k$, and a path from $a_j$ to $a_l$ inside $j$ and $l$, and these paths intersect. As $j$ and $k$ do not intersect, the intersection has to be in $i\cap j$ or $i \cap l$ or $k \cap l$. In the first and second cases, $j$ is nested in $i$, in the third case, $l$ is nested in $k$.
\item[{\bf R3}] It is not possible to have $i<j<k<l$ with: $(i,k) \in E$, $(j,l)\in E$, $(i,j)\notin E$, $(j,k)\notin E$ and $(k,l) \notin E$ (corollary of the previous remark).
\item[{\bf R4}] It is not possible to have $i<j<k<l$ with: $(i,l) \in E$, $(j,k)\in E$, $(i,k)\notin E$ and $j$ nested in $i$. Indeed, if $j$ is nested in $i$, then as $(j,k)\in E$ and $(i,k)\notin E$, we have $\ell^i_y > a^k_y$, but $(i,l)\in E$ so $\ell^i_y\leq a^l_y$, which is not possible as $a^k_y > a^l_y$.
\end{itemize}
Consider now the cube with the vertices named like in Figure \ref{fig:cube}. By symmetry, we may assume that the first vertex is 1 and that its neighbors, 2,3 and 4, appear in that ordering ($2<3<4$). Then we consider the different cases for vertex 7.

\begin{itemize}
\item If 7 is before 3 in the ordering (i.e., just after 1 or between 2 and 3), then (1,7,3,4) contradicts {\bf R3}.
\item If 7 is between 3 and 4. Using  {\bf R3} on (1,2,3,7), 2 must be nested in 1. Then (1,2,7,4) contradicts {\bf R4}.
\item If 7 is at the end, there is no contradiction if 2 is nested in 1. Thus we consider the possible positions of vertex 8 in the ordering: If $1 < 8 < 2$ then (1,8,2) contradicts {\bf R1}; If $2 < 8 < 3$ (resp. $3 < 8 < 4$)  then (1,2,8,3) (resp. (1,2,8,4)) contradicts  {\bf R4}; If $4 < 8 < 7$ (resp.  $7 < 8$) then (3,4,8,7) (resp. (2,3,7,8)) contradicts {\bf R3}. This covers all possible positions of vertex 8 in the ordering, obtaining a contradiction in each one.
\end{itemize}
Then the cube is not in $\mathcal{G}_{\text{c-sep}}$ and the classes are different.
\end{proof}

\section{Discussion}

To conclude the paper we mention open problems that are worth further investigation. First, note that the computational complexity of \MHS is open
for all classes of rectangle families considered in this paper. The complexity of
recognizing the intersection graphs of different rectangles families is also open. It is known that the most general version of this problem, that is
recognizing if a graph is the intersection graph of a family of rectangles, is NP-complete~\cite{Yannakakis82}. However, little is known for
restricted classes. Finally, it would be interesting to determine the duality gap for the classes of rectangle families studied here.

\subsection*{Acknowledgements} We thank V{\'i}t Jel{\'i}nek for allowing us to include the lower bound example in Figure~\ref{fig:gap2}, and Flavio Gu\'i\~nez and Mauricio
Soto for stimulating discussions.
This work was partially supported by
N\'ucleo Milenio Informaci\'on y Coordinaci\'on en Redes ICM/FIC P10-024F and done while the second author was visiting Universidad de Chile.

\end{document}